\newtheorem{defn}{Definition}
\newtheorem{thm}{{\cal T}heorem}
\newtheorem{lem}{Lemma}
\newtheorem{example}{Example}
\def\Ddots{\mathinner{\mkern1mu\raise\p@
		\vbox{\kern7\p@\hbox{.}}\mkern2mu
		\raise4\p@\hbox{.}\mkern2mu\raise7\p@\hbox{.}\mkern1mu}}
\newcommand{\ndiv}{\hspace{-4pt}\not|\hspace{2pt}}
\begin{document}
	\title{Multi-access Coded Caching Scheme with Linear Sub-packetization using PDAs}
\author{
	\IEEEauthorblockN{Shanuja Sasi and B. Sundar Rajan}
	\IEEEauthorblockA{  Indian Institute of Science, Bengaluru\\
		E-mail: shanuja@iisc.ac.in, bsrajan@iisc.ac.in}
}
\maketitle 
%%%%%%%%%%%%%%%%%%%%%%%
\begin{abstract}
	We consider multi-access coded caching problem introduced by Hachem et.al., where each user has access to $L$ neighboring caches in a cyclic wrap-around fashion. We focus on the deterministic schemes for a specific class of multi-access coded caching problem based on the concept of PDA. We construct new PDAs which specify the delivery scheme for the specific class of multi-access coded caching problem discussed in this paper. For the proposed scheme, the coding gain is larger than that of the state-of-the-art while the sub-packetization level varies only linearly with the number of users. Hence, we achieve a lower transmission rate with the least sub-packetization level compared to the existing schemes. 
%	Hence, the advantage of the proposed scheme is two-fold, in terms of the coding gain as well as the sub-packetization level.
\end{abstract}

\section{introduction}
\label{sec1}
The drastic increase in the demand for video streaming services is one of the key factors for introducing the coded caching scheme by Maddah-Ali and Niesen \cite{maddahali2015flcc}. The main objective behind the coded caching scheme is to relieve the traffic burden during peak hours by utilizing the ample cache memories available at the user ends.  The set-up consists of a central server having access to a set of $N$ files of equal size, which is connected through an error-free link to a set of $K$ users  where each user has  a cache size of $M$ files. The proposed ($K,M,N$) coded caching scheme operates in two phases. The first is the placement phase where each file is divided into $F$ equal packets and the cache memories are filled with some of these packets during the off-peak hours. The second phase is the delivery phase when the demands are revealed by the users. Once the demands are known, the server transmits coded symbols  of length $\mathcal{R}$ files over an error-free link to all the users such that all the users are able to meet their demands  from the local cache content and the transmitted symbols. The quantity $\mathcal{R}$ is referred to as the transmission rate. 

Coded caching has been extensively studied over the past few years \cite{tuninetti2016uncoded,avestimar2017tradeoff,jin2018placeOpti,vilardebo2018coded,nieson2017nonuniform,gunduz2018nonuniform,nieson2015decent,caire2016d2d}. The
value of $F$, which is known as the sub-packetization level, is directly proportional to the complexity of a
coded caching scheme.
It is well known that there is a tradeoff between the sub-packetization level $F$ and the transmission rate $\mathcal{R}$.
In the scheme introduced by Maddah-Ali and Niesen \cite{maddahali2015flcc}, which we refer as MN scheme, the sub-packetization level grows exponentially with respect to the number of users $K$, which makes it infeasible for practical implementation. Reducing the sub-packetization level of the coded caching schemes has been a major problem studied  during the past few years.

Yan et al. \cite{Yan2017PDA}, represented a coded caching scheme by an array called Placement Delivery Array (PDA) with an aim to reduce the sub-packetization level. It is shown that the MN scheme can be represented by a PDA and is optimal among regular PDAs, which is a specific class of PDAs. Although the scheme proposed by Yan et. al. has a lower sub-packetization level compared to the MN scheme at the expense of a slight increase in the transmission rate, the sub-packetization level still increases sub-exponentially with $K$.  The concept of PDA has been identified as an effective tool to reduce the sub-packetization level and since then various coded caching schemes based on the concept of PDA have been reported  \cite{cheng2020PDA,cheng2019groupingscheme,cheng2019flexiblememory,Michel2020EdgeColoring,SZG2018hypergraph,Yan2018bipartitegraph,zhong2020concatenating}.

\subsection{Multi-access Coded Caching}
 Unlike the caching models where it is assumed that each user has their own dedicated cache, in this paper, we consider a new model referred to as multi-access coded caching model which was introduced in \cite{nihkil2017multiaccess}.  In this model, as illustrated in Fig. \ref{model}, there is a central server having access to a collection of $N$ files, $\mathcal{W}=\{W_0,W_1,W_2,\ldots,W_{N-1}\}$, each of size $1$ unit, connected through an error-free link to a set of $K$ users, $\mathcal{U}=\{U_0,U_1, \ldots,U_{K-1}\}$. There are $K$ caches, $\mathcal{C}=\{C_0,C_1, \ldots,C_{K-1}\}$, each having a storage capacity $M = N \gamma$ files, where $\gamma$ is defined as the normalized cache size. Each user can access $L$ caches in a cyclic wrap-around fashion. The content stored in each cache $C_{\alpha}, \alpha \in \{0,1,\ldots K-1\}$, is denoted by $M_{\alpha}$.
The cache content accessible to the user $U_{\alpha}, \alpha \in \{0,1,\ldots K-1\},$ is denoted by $\mathcal{Z}_{\alpha}$.   The index of the file requested by the user $U_{\alpha}, \alpha \in \{0,1,\ldots K-1\},$ is denoted by $d_{\alpha}$. The demand vector is denoted by  ${\bf d} = (d_0 , d_1 , \ldots, d_{K-1} )$. 
 
 Like in the centralized coded caching scheme, multi-access coded caching scheme also operates in two phases: a placement phase and a delivery phase. In the placement phase the caches are filled with parts of the files from the servers' database. 
  In the delivery phase, each user $U_{\alpha}$ reveals their demands, which is assumed to be a file from the database. Based on the demand vector and the cache content accessible to each user, the server transmits coded symbols so that each user $U_{\alpha}$ can decode the desired file $W_{d_{\alpha}}$ using the transmissions as well as the cache content.  The overall objective of the multi-access coded caching problem is to obtain placement and delivery schemes so as to minimize the transmission rate, which is defined as the amount of data transmitted by the server in the units of files. The number of users for which each transmission is beneficial is termed as the \textit{coding gain}.
 
 {\it Notations:}  $[n]$ represents the set $\{1,2, \ldots , n\}$, $[a,b]$ represents the set $\{ a, a+1, \ldots, b \}$ while $[a,b)$ represents the set $\{a,a+1, \ldots , b-1\}$. The bit wise exclusive OR (XOR) operation is denoted by $\oplus.$  $\lfloor x \rfloor$ denotes the largest integer smaller than or equal to $x$ and $\lceil x \rceil$ denotes the smallest integer greater than or equal to $x$. 
 $a|b$ implies $a$ divides $b$ and 
 $a \ndiv b$ implies $a$ does not divide $b$, for some integers $a$ and $b$.	The transpose of any matrix \textbf{A} is represented by $\textbf{A}^T.$ For any $m \times n$ matrix  $\textbf{A} = (a_{i,j}), i \in [0,m-1],n \in [0,n-1],$  the matrix $\textbf{A}+b$, is defined as $\textbf{A}+b =(a_{i,j}+b)$.
%%%%%%%%%%%%%%%%%%%%%%%%%%%%%%%%%%%%%%%%%%%%%%%%%%%%%%%%%%%
 \subsection{Previous Results}
The multi-access coded caching scheme was introduced in  \cite{nihkil2017multiaccess} where the authors have provided a coloring based scheme for the proposed problem. 
In \cite{nikhil2020ratememory}, a new scheme, which we refer as RK  scheme, was proposed by mapping of the coded caching problem to the \textit{index coding} problem achieving a transmission rate $\mathcal{R}_{RK} $ which is less than that achieved in \cite{nihkil2017multiaccess}. 
%%%%%%%%%%%%
	\begin{equation}
	\mathcal{R}_{RK}(\gamma) = 
\left\{
\begin{array}{ll}
\frac{\left (K-K\gamma L\right  )^2}{K}, & \forall \gamma  \in \left \{\frac{k}{K} : k \in [0,\left \lfloor \frac{K}{L} \right \rfloor] \right \} \\
0, &  \textit{for     } \gamma = \left \lceil \frac{K}{L} \right \rceil \frac{1}{K}
\end{array}
\right.
	\end{equation}
%%%%%%%%%%%%%%%%
The sub-packetization level required for RK  scheme is $F_{RK}=\frac{1}{\gamma } {K-K\gamma (L-1)-1 \choose K\gamma -1}$.

 A lower bound $ \mathcal{R}_{lb}$ on the optimal transmission rate-memory trade-off was also derived in \cite{nikhil2020ratememory} for any multi-access coded caching problem when $L \geq \frac{K}{2}$.
For some special cases, namely for $L \geq \frac{K}{2}$, and when $L=K-1, L=K-2, L=K-3$ when $K $ is even, and
$L=K-\frac{K}{g}+1$ for some positive integer $g$, an achievable scheme was proposed separately in \cite{nikhil2020ratememory}, which is optimal.
%%%%%%%%%%%
%{\small
%\[
% \mathcal{R}_{lb}(\gamma ) = 
%\left\{
%\begin{array}{ll}
%K - \left(K- \frac{(K-z)(K-z+1)}{2K} \right) K\gamma , & \text{if  } 0\leq \gamma  \leq \frac{1}{K} \\
%\frac{(K-z)(K-z+1)}{2K}\left (2-K\gamma  \right ), & \text{if  } \frac{1}{K}\leq \gamma  \leq \frac{2}{K} \\
%0, & \text{if   }\gamma  \geq \frac{2}{K}
%\end{array}
%\right.
%\]}
%%%%%%%%%%%%
%%%%%%%%%%OLD%%%%%%%%%%%
\begin{align*}
 \mathcal{R}_{lb}(\gamma ) = \begin{cases*}
 K- \left [  K- \frac{(K-L)(K-L+1)}{2K} \right ] K \gamma , & \text{if $0\leq \gamma  \leq \frac{1}{K}$} \\
 	\frac{(K-L)(K-L+1)}{2K}\left (2-K \gamma  \right ), & \text{if $\frac{1}{K}\leq \gamma  \leq \frac{2}{K}$} \\
 	0, & \text{if $\gamma  \geq \frac{2}{K}$}
 \end{cases*}
\end{align*}

In \cite{parinello2019multiacessgains}, the authors have studied two special cases, the first case is when $K\gamma =2$, and the second case is when $L=\frac{K-1}{K\gamma }$ for any $K\gamma$. For $K\gamma =2$, the authors have proposed a novel coded caching scheme, which we call as SPE scheme, that achieves a coding gain that exceeds $2$ with a sub-packetization level of $F_{SPE} =\frac{K(K-2L+2)}{4}$. The transmission rate achieved using the SPE scheme is \begin{equation} \label{rate SPE}
\mathcal{R}_{SPE}=\frac{X_1+X_2}{S},
\end{equation}
where $X_1$, $X_2$ and $S$ are defined in Eq. (2) in \cite{parinello2019multiacessgains}.
The coding gain always exceeds $3$ and approaches $4$ for some values of $K$ and $L$. For the second case considered in \cite{parinello2019multiacessgains}, i.e., when $L=\frac{K-1}{K\gamma }$ for any $K\gamma $, the authors have provided an achievable scheme which is optimal.

In \cite{SR2020improvedrate}, the authors have proposed an improved scheme for any $\gamma \in \left \{\frac{k}{K}: \gcd(k,K)=1,  k\in \left [1,K \right ]\right \}$, achieving a rate which is less than or equal to $\mathcal{R}_{RK}(\gamma).$ 

In \cite{Caire2020noveltransformation}, the authors have used a \textit{novel transformation} approach to extend the MN scheme to multi-access caching schemes, such that the resulting scheme has the maximum local caching gain and the same coding gain as the related MN scheme. We refer to the scheme proposed in \cite{Caire2020noveltransformation} as NT scheme. The coding gain achieved using the NT scheme is $K\gamma +1$ with a sub-packetization level of $F_{NT}=K {K-K\gamma(L-1) \choose K\gamma}$. 
The transmission rate achieved using the NT scheme is given by
 \begin{equation}
\mathcal{R}_{NT} = \frac{K-K\gamma L}{K\gamma+1}.
\end{equation}

In \cite{nikhil2020structured}, the multi-access coded caching problem is divided into a number of special class of index coding problems termed as \textit{Structured Index Coding} problem and using the solutions obtained for the structured index coding problem, a new scheme, which we refer as  NK scheme, was proposed for the multi-access coded caching problem which achieves the following rate:
\begin{equation} \label{rate NK }
\mathcal{R}_{NK } = \frac{\sum_{{\bf b} \in \mathcal{B} } \min\{2(K-K\gamma L) +K\gamma-1-\tilde{\bf b},K\}}{F_{NK }(K\gamma+1)},
\end{equation}
where $\mathcal{B}$ is the collection of all the weak $K\gamma +1$ compositions of $K-K\gamma L-1$, $\tilde{\bf b}$ denotes the maximum component in the vector ${\bf b}  \in \mathcal{B}$ and $F_{NK }=\frac{1}{\gamma } {K-K\gamma (L-1)-1 \choose K\gamma -1}$ denotes the sub-packetization level.
In \cite{MR2020linearsubpacketization}, the authors have proposed a scheme for $\gamma=\frac{1}{K}$ when $L\geq \frac{K}{2}$, which achieves linear sub-packetization level with a slightly higher load than the above schemes.
  Multi-access coded caching scheme with the number of users not equal to the number of caches was studied in \cite{KMR2020CRD} where the authors have identified a special class of resolvable designs called cross resolvable designs.

	In \cite{SR2020improvedrate}, we have studied the cases when $\gamma \in \left \{\frac{k}{K}: \gcd(k,K)=1,  k\in \left [1,K \right ]\right \}$. In this work we have taken up a particular case, when $\gamma \in \left \{\frac{k}{K}: k | K, (K-kL+k)|K, k\in \left [1,K \right ]\right \}$, which is not covered in \cite{SR2020improvedrate}. Also, for the cases studied in \cite{SR2020improvedrate}, the worst case sub-packetization level required is $K^2$ while in this work, the sub-packetization level required is $K$.
%%%%%%%%%%%SETTING$%%%%%%%%%%%%%%%%
%\subsection{Background and Preliminaries}
%In this section, we formally define the multi-access coded caching problem considered in this work. The system model \cite{nihkil2017multiaccess,nikhil2020ratememory}, as shown as in Fig. \ref{model}, consists of a network comprising of a central server  storing $N$ files, $W^0,W^1,W^2,\ldots,W^{N-1}$, each of size $1$ unit, $K$ users, $U_0,U_1, \ldots,U_{K-1}$, and $K$ caches, $C_0,C_1, \ldots,C_{K-1}$, such that
%
%\begin{itemize}
%	\item each user is connected to $z$ neighboring caches in a cyclic wrap-around fashion, and has access to the data stored in those caches,
%	\item each cache has a memory size of $M =N \gamma $ files, where $\gamma \in \{\frac{k}{K}, k=1,2,\cdots, K\}$ is the normalized cache size, 
%	\item each user demands one among the $N$ files, and
%	\item  all the $K$ users are connected via an error-free shared link to the server.
%\end{itemize}
%Each cache is connected to exactly $z$ users and each user has access to exactly $z$ caches. Each user $U_{\alpha}, \alpha \in \{0,1,\ldots,K-1\}$, has access to all the caches in the set $\mathcal{C}_\alpha$, where  $\mathcal{C}_\alpha= \{C_{j}: j \in \{\alpha,\alpha+1,\ldots, \alpha+z-1\}\}.$

%%%%%%%%%%%%%%%%%%%%%%%%%5
\begin{figure}[!t]
	\centering
	\includegraphics[width=17pc]{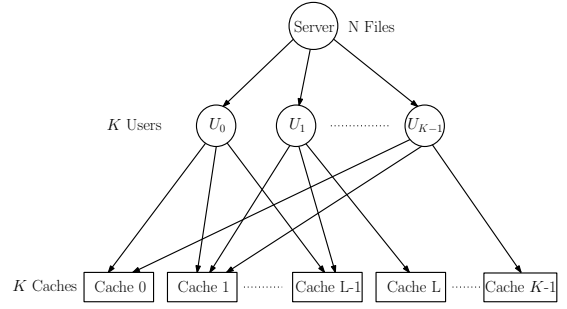}
	\caption{Multi-access Coded Caching Network \cite{nihkil2017multiaccess} consisting of a central server, $K$ users, and $K$ caches where each user is connected to $L$ neighboring caches.}
	\label{model}
\end{figure}
%%%%%%%%%%%%%%%%%%%Previos Results%%%%%%%%%%%%%%%%%%%%

\subsection{Our Contributions}
Our contributions in this paper are summarized as follows.
\begin{itemize}
%	\item We propose a new scheme for the considered case based on the concept of PDA. 
	\item We construct a new class of PDA which we call as \textit{$t$-cyclic $g$-regular PDA}, which is used for providing the delivery algorithm in our scheme.
	\item We prove that the advantage of the proposed scheme is two-fold, in terms of the coding gain as well as the sub-packetization level compared to NT, RK  and SPE schemes. The sub-packetization level $K$ required for our scheme is $F_{new}=K$, which varies linearly with the number of users while the sub-packetization level varies sub-exponentially with respect to the number of users in NT, RK  and SPE schemes.
\end{itemize}

\section{Main Result}
\label{sec2}	
We discuss our main result in this section. We characterize our result in Theorem \ref{thm1}.
%%%%%%%%%%%
\begin{thm} 
	\label{thm1}
	Consider a  multi-access coded caching scenario with $N$ files, and $K$ users, each having access to $L$ neighboring caches in a cyclic wrap-around way, with each cache having a normalized capacity of $ \gamma$, where $\gamma \in \left \{\frac{k}{K}: k | K, (K-kL+k)|K, k\in \left [1,K \right ]\right \}$. The following transmission rate $\mathcal{R}_{new}(\gamma)$ is achievable.
	%%%%%

	\begin{equation}
				\mathcal{R}_{new}(\gamma) = \frac{(K-kL)(K-kL+k)}{2K}
	\end{equation}
\end{thm}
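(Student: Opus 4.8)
The plan is to establish Theorem~\ref{thm1} constructively, by exhibiting the \emph{$t$-cyclic $g$-regular PDA} promised in the contributions and then appealing to the PDA--to--scheme correspondence, under which an $F\times K$ array with $S$ distinct integer symbols yields a multi-access scheme of sub-packetization $F$ and transmission rate $S/F$. I would fix $F=K$, so the array is $K\times K$ with columns indexed by the caches $C_0,\dots,C_{K-1}$ and rows by the $K$ packets of each file, and aim to produce exactly $S=\tfrac{(K-kL)(K-kL+k)}{2}$ distinct symbols; dividing by $F=K$ then gives $\mathcal{R}_{new}(\gamma)$ at once.

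The first substantive step is the placement. Writing $\gamma=k/K$ with $k\mid K$, I set $m=K/k$ and partition the packet indices $[0,K)$ into $m$ consecutive blocks $G_0,\dots,G_{m-1}$ of size $k$, letting cache $C_c$ store block $G_{c\bmod m}$ (the star pattern of the array). Since $K-kL+k$ must be a positive divisor of $K$ we have $L\le m$, so the $L$ cyclically consecutive blocks read by any user are pairwise disjoint; each user thus attains the maximal local caching gain and is left needing precisely the $m-L=\tfrac{K-kL}{k}$ blocks (that is, $K-kL$ packets) immediately preceding its window cyclically. This reduces the whole problem to a \emph{cyclic} instance on $m$ super-caches in which every user wants the $p-1$ blocks preceding its $L$-block window, where $p:=m-L+1=\tfrac{K-kL+k}{k}$ and, crucially, $p\mid m$.

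The heart of the argument is the delivery, i.e.\ the integer labelling of the non-star cells and the proof that it gives a valid PDA. I would label so that the coded symbols are indexed by (unordered) pairs drawn from each length-$p$ cyclic window, refined across the $k$ positions inside a block; a direct count would then give $S=k^2\binom{p}{2}=\tfrac{k^2(m-L)(m-L+1)}{2}=\tfrac{(K-kL)(K-kL+k)}{2}$, with each symbol recurring $g=2K/(K-kL+k)$ times (an integer by hypothesis), which is exactly where the ``$g$-regular'' property and $(K-kL+k)\mid K$ enter to force the cyclic labelling to close up consistently. \textbf{The main obstacle is decodability}: I must verify the interference condition in its multi-access form, namely that for every coded symbol $s$ and every user $U_\alpha$ benefiting from $s$, all the other packets XOR-ed into $s$ already lie in the union of the $L$ caches accessible to $U_\alpha$. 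This is delicate because a packet missing from one accessed cache may still be present in another cache of the same user's window; establishing it amounts to showing that the contiguous cyclic geometry of the blocks makes each interfering packet fall inside the user's accessed window. By the cyclic symmetry of both placement and labelling it suffices to check a single representative user, after which the property propagates to all $K$ users by shifting.

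Once the array is confirmed to be a valid $t$-cyclic $g$-regular PDA with the stated number of distinct symbols, the PDA--to--scheme correspondence delivers a scheme with $F=K$ and rate $S/F=\tfrac{(K-kL)(K-kL+k)}{2K}$, which is the claim of Theorem~\ref{thm1}. I would close by recording that the two hypotheses $k\mid K$ and $(K-kL+k)\mid K$ are used in exactly two places---to make the block sizes integral and to make the window length $p$ divide $m$---so that they are not merely convenient but are precisely what render the construction well defined.
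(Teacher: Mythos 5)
Your plan is essentially the paper's own proof: the placement is identical (cache $C_c$ stores the $k$ consecutive sub-files forming block $c \bmod \frac{K}{k}$), and the paper likewise realizes the delivery by exhibiting a $k$-cyclic $\frac{2K}{K-kL+k}$-regular $\left(K,K,kL,\frac{(K-kL)(K-kL+k)}{2}\right)$ PDA --- its Algorithm 2 builds a triangular $\frac{K-kL+k}{k}\times\frac{K-kL+k}{k}$ block ${\bf A}$, places ${\bf A}$ and ${\bf A}^T$ block-cyclically (which is exactly your ``pairs within a length-$p$ window, reused across the $\frac{K}{K-kL+k}$ tiles'' ansatz, with your counts $S=k^{2}\binom{p}{2}$ for $p=\frac{K-kL+k}{k}$ and $g=\frac{2K}{K-kL+k}$ matching the paper's $S_1$, $\tilde{S}_1$ bookkeeping), then expands $k$-fold in rows and columns, and finally invokes the PDA-to-scheme correspondence to get rate $S/F$, just as you do. The decodability obligation you correctly flag as the main obstacle is discharged in the paper's Section IV by explicit sub-matrix case analysis of condition \textit{C3} on adjacent and non-adjacent row blocks; the one detail your sketch leaves open that genuinely matters is the staggering of each pair's two occurrences --- in the paper the second occurrence of the symbol for pair $\{i,j\}$ lies in the \emph{preceding} window's rows (the ${\bf A}^T$ copy sits one row block above the diagonal ${\bf A}$), since placing both occurrences symmetrically inside the same tile would put integers into star (accessible) cells and violate the $kL$-consecutive-star column pattern.
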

 Comparison of our scheme with the state-of-the-art is done in Table \ref{tab1}.
  The transmission rate vs $L$ plot vs $\gamma$  is  obtained for $K=24$ in Fig. \ref{fig: pda} and the sub-packetization level vs $L$ vs $\gamma$ plot is  obtained for the same value of $K$ in Fig. \ref{fig: subpacket}.
   It can be observed that the transmission rate achieved using our scheme for the points considered in Theorem \ref{thm1}  is less compared to that achieved using the RK, and SPE schemes. It can also be observed that the transmission rate achieved using our scheme for all the points considered in Theorem \ref{thm1}, except when $L=3, k=6$ and $L=4,k=4$, is less compared to that achieved using the NT scheme.The sub-packetization level required is less for our scheme compared to  the NT, RK, and SPE schemes.  
% The transmission rate achieved using our scheme is slightly higher compared to the NK scheme for $K=24$ and $k=2$, while the sub-packetization level required for our scheme is  less compared to the NK scheme. 
The placement scheme and the delivery algorithm achieving the rate claimed in Theorem \ref{thm1} is given in Section \ref{proof}.
   \begin{figure*}[!t]
 	\centering
 	\includegraphics[scale=0.35]{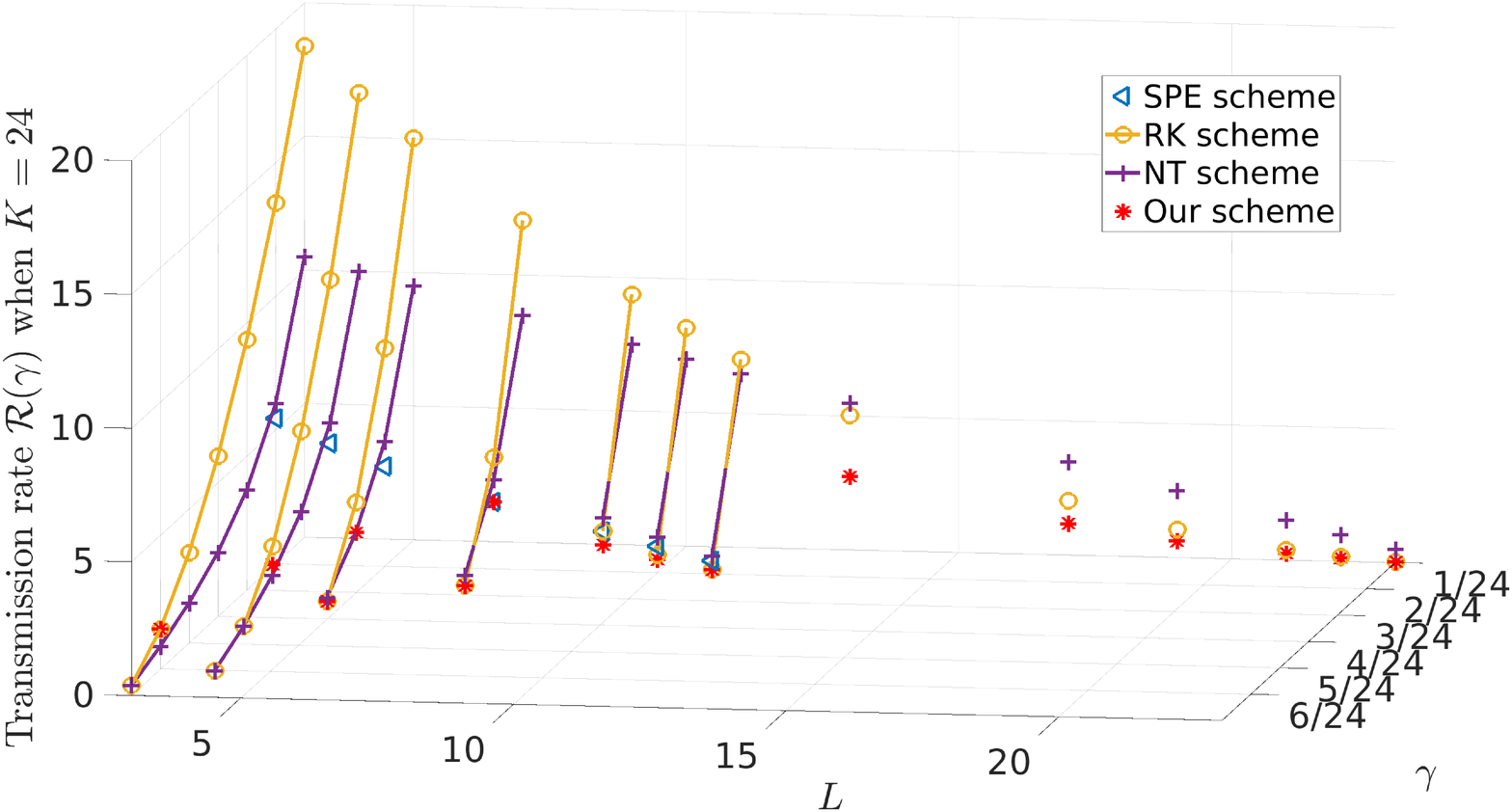}
 	\caption{Transmission rate vs $L$ vs $\gamma$ plot when $K=24$.}
 	\label{fig: pda}
 \end{figure*}
 \begin{figure*}[!t]
	\centering
	\includegraphics[scale=0.35]{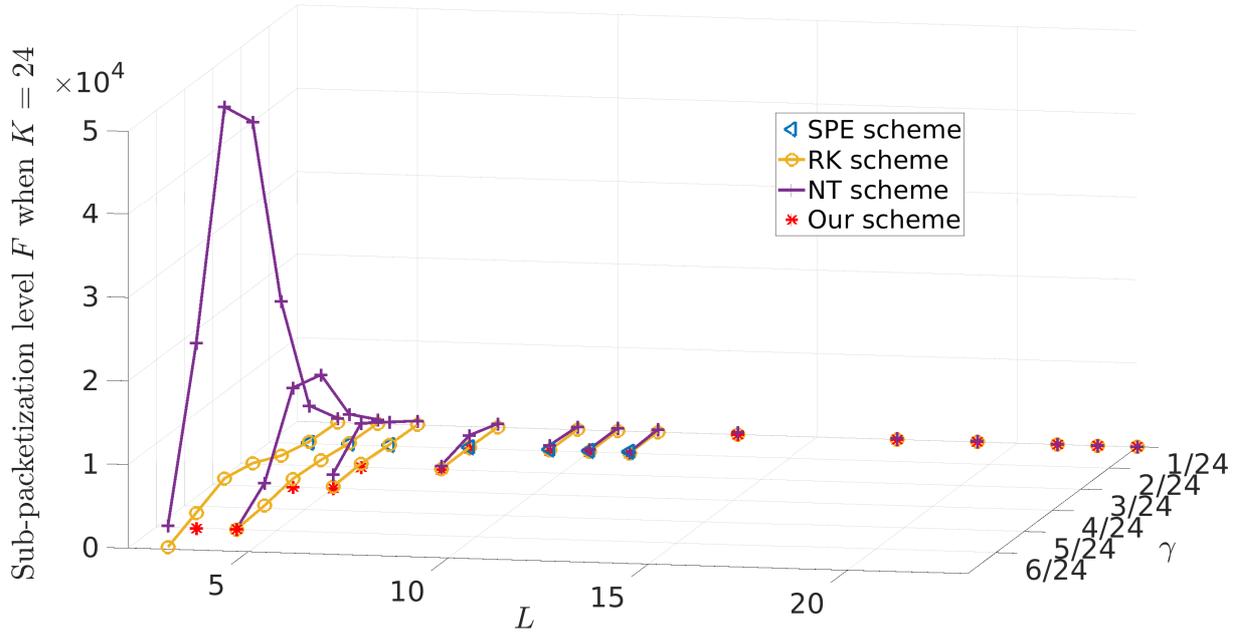}
	\caption{Sub-packetization level vs $L$ vs $\gamma$ plot  when $K=24$.}
	\label{fig: subpacket}
\end{figure*}
\begin{table*}
	\begin{center}
		\begin{tabular}{ | c|c|c  |c|}
			\hline	
		& \textit{Sub-packetization Level}  &  \textit{Coding Gain} & \textit{Transmission Rate}   \\ \hline \hline
			NT scheme \cite{Caire2020noveltransformation}& $K{K-kL+k \choose k}$& $k+1$& $\frac{K-kL}{k+1}$ \\ \hline
	RK  scheme	\cite{nikhil2020ratememory}&${K-kL+k-1 \choose k-1} \frac{K}{k}$ & $\frac{K}{K-kL}$&$\frac{(K-kL)^2}{K}$ \\ \hline
		SPE scheme (for $k=2$)	\cite{parinello2019multiacessgains}&$(\frac{K(K-2L+2)}{4})$ & $3$ to $4$& $\mathcal{R}_{SPE}$ as in (\ref{rate SPE}) \\ \hline
		NK scheme \cite{nikhil2020structured} 	&${K-kL+k-1 \choose k-1} \frac{K}{k}$ & $\frac{\mathcal{R}_{NK }}{K-kL}$&$\mathcal{R}_{NK}$ as in (\ref{rate NK }) \\ \hline
		Our scheme & $K$& $\frac{2K}{K-kL+k}$& $\frac{(K-kL)(K-kL+k)}{2K}$ \\ \hline
		\end{tabular}
	\end{center}
	\caption{Comparison of our scheme with the state-of-the-art when  $\gamma \in \left \{\frac{k}{K}: k | K, (K-kL+k)|K, k\in \left [1,K \right ]\right \}$.}
	\label{tab1}
\end{table*}
\subsection{Comparison with the NT Scheme}
The coding gain achieved in our scheme is $\frac{2K}{K-kL+k}$ while that is $k +1$ in the NT scheme. We prove in Lemma \ref{lem NT coding gain} that the coding gain achieved in our scheme is more than that achieved using the NT scheme, if $L >\frac{K(k-1)}{k(k+1)}+1$.  The sub-packetization level required for our scheme is $F_{new}=K$ while it is $F_{NT} =K {K-kL+k \choose k}$ for the NT scheme. Hence the sub-packetization level grows linearly with the number of users in our scheme while it grows sub-exponentially with respect to the number of users in the NT scheme. So, the transmission rate achieved in our scheme is less than that achieved using the NT scheme, if $L >\frac{K(k-1)}{k(k+1)}+1$. Moreover, the coding gain increases as $L$ increases in our scheme while the coding gain is independent of $L$ in the NT scheme. Hence, as $L$ increases the gap between the transmission rate between our scheme and the NT scheme increases. Our scheme is better in terms of both coding gain as well as sub-packetization level compared to the NT scheme. 
\begin{lem} \label{lem NT coding gain}
	For the cases considered in Theorem \ref{thm1}, the coding gain achieved in our scheme, $\frac{2K}{K-kL+k}$, is more than the coding gain, $k+1$, achieved using the NT scheme, if $L >\frac{K(k-1)}{k(k+1)}+1$.
\end{lem}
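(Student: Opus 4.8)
The plan is to establish the stated threshold on $L$ by a short chain of \emph{equivalent} inequalities, starting from the defining comparison of the two coding gains. Concretely, I would begin from
\begin{equation*}
\frac{2K}{K-kL+k} > k+1,
\end{equation*}
which is exactly the assertion that our coding gain exceeds the NT coding gain, and reduce it to the claimed bound on $L$.

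The first step is to observe that the denominator $K-kL+k$ is strictly positive for every admissible parameter triple. This follows directly from the hypotheses of Theorem~\ref{thm1}: the set over which $\gamma$ ranges requires $(K-kL+k)\mid K$, and since $K>0$ this divisibility forces $K-kL+k$ to be a positive integer. Positivity of the denominator is precisely what lets the fraction be cleared without reversing the inequality, so this is the only point that needs a brief justification before the computation becomes mechanical.

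Given positivity, I would cross-multiply to obtain the equivalent inequality $2K > (k+1)(K-kL+k)$, expand the right-hand side, and collect the terms containing $L$ on one side. Factoring out $k(k+1)$---which is positive because $k\ge 1$---and dividing through yields exactly $L > \frac{K(k-1)}{k(k+1)}+1$. Since every step is a reversible equivalence, the same argument also delivers the converse, so the threshold is in fact sharp: our coding gain strictly exceeds $k+1$ if and only if $L$ exceeds the stated bound.

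I do not anticipate a genuine obstacle, as the argument is a one-line algebraic reduction. The only care required is tracking the sign of the denominator, which is handled in the first step, and correctly identifying the coefficient $k(k+1)$ of $L$ when rearranging, so that the final division preserves the direction of the inequality.
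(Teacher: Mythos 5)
Your proposal is correct and is essentially the paper's argument: the paper performs the identical algebraic rearrangement of $\frac{2K}{K-kL+k}$ versus $k+1$, merely phrased as a proof by contradiction (assuming $k+1 \geq \frac{2K}{K-kL+k}$ and deducing $L \leq \frac{K(k-1)}{k(k+1)}+1$), whereas you run the same chain directly as reversible equivalences. Your explicit verification that $K-kL+k>0$ before clearing the denominator is a point the paper uses only implicitly when cross-multiplying, so your write-up is if anything slightly more careful, and it additionally yields the sharpness of the threshold (the ``if and only if'') for free.
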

%The proof of Lemma \ref{lem NT coding gain} is provided in the extended version in \cite{SRdropbox}.
\begin{proof}
	Assume that the coding gain achieved in our scheme is less than or equal to that achieved using the NT scheme if $L >\frac{K(k-1)}{k(k+1)}+1$, i.e.,
	\begin{align*}
	k+1& \geq \frac{2K}{K-kL+k}\\
	\Rightarrow (k+1)(K-kL+k) &\geq  2K \\
%	\Rightarrow (k+1)\left (K-k \left (\frac{K-m}{k} \right ) \right ) &\geq  2K \\
\Rightarrow Kk+K-k(k+1)(L-1) &\geq  2K \\
\Rightarrow k(k+1)(L-1) &\leq  K(k-1) \\
\Rightarrow L &\leq  \frac{K(k-1)}{k(k+1)}+1
	\end{align*}
	This contradicts our assumption that $L >\frac{K(k-1)}{k(k+1)}+1$. Therefore the coding gain achieved in our scheme is more than that achieved using the NT scheme, if $L >\frac{K(k-1)}{k(k+1)}+1$.
\end{proof}
\subsection{Comparison with the RK  Scheme}
For the considered cases, the coding gain achieved in our scheme is $\frac{2K}{K-kL+k}$ while that is $\frac{K}{K-kL}$ in the RK  scheme. We prove in Lemma \ref{lem RK  coding gain} that our scheme is better in terms of coding gain as compared to the RK  scheme. The sub-packetization level required for our scheme is $F_{new}=K$ while it is  $F_{RK}=\frac{1}{\gamma } {K-K\gamma (L-1)-1 \choose K\gamma -1}$ for the RK  scheme. Even though the sub-packetization level required for the RK  scheme is less compared to the NT scheme, it still grows sub-exponentially with respect to the number of users while the sub-packetization level
grows only linearly with the number of users in our scheme. Hence, compared to the RK  scheme, our scheme is better in terms of both coding gain and sub-packetization level.
%\begin{remark}
%	When $k=1$, and $L$the transmission rate achieved using our scheme is $\mathcal{R}_{new}(\frac{1}{K}) =\frac{(K-L)(K-L+1)}{2K}$. 
%\end{remark}
\begin{lem} \label{lem RK  coding gain}
	For the cases considered in Theorem \ref{thm1}, the coding gain achieved in our scheme, $\frac{2K}{K-kL+k}$, is greater than or equal to the coding gain, $\frac{K}{K-kL}$, achieved using the RK  scheme.
\end{lem}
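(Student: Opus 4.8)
The plan is to reduce the claimed inequality to an elementary statement about integers and then invoke the divisibility hypothesis of Theorem~\ref{thm1}. Since the RK coding gain $\frac{K}{K-kL}$ is meaningful (finite and positive) only when $K-kL>0$, both denominators $K-kL$ and $K-kL+k$ are strictly positive. I can therefore cross-multiply the target inequality $\frac{2K}{K-kL+k}\geq\frac{K}{K-kL}$ without reversing it, and cancel the common factor $K$, reducing the claim to the equivalent inequality $2(K-kL)\geq K-kL+k$, that is, to $K-kL\geq k$.

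The key step is to establish $K-kL\geq k$ using the constraint $k\mid K$ that defines the admissible values of $\gamma$ in Theorem~\ref{thm1}. Writing $K=kp$ with $p=K/k$ a positive integer, I have $K-kL=k(p-L)$, which is an integer multiple of $k$. Because $K-kL>0$ forces $p-L>0$, and $p,L$ are integers, we get $p-L\geq 1$, hence $K-kL=k(p-L)\geq k$. This is exactly the reduced inequality, and tracing back the equivalences establishes the lemma; equality holds precisely when $p-L=1$, i.e.\ $K=k(L+1)$.

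The main subtlety, rather than a genuine obstacle, is the realization that integrality alone closes the gap: the quantity $K-kL$ is a multiple of $k$, so it cannot lie strictly between $0$ and $k$. Without the hypothesis $k\mid K$ one could imagine parameter ranges in which the RK gain $\frac{K}{K-kL}$ exceeds ours, but the divisibility condition built into the theorem rules these out automatically. I expect no computational difficulty beyond this single observation.
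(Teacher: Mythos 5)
Your proof is correct and takes essentially the same route as the paper's: both arguments cross-multiply (valid since $K-kL>0$ in the cases considered) to reduce the claim to $K-kL\geq k$, equivalently $L\leq \frac{K}{k}-1$, and then close the gap using integrality. The only difference is presentational — the paper argues contrapositively, deriving $L>\frac{K}{k}-1$ from the reversed inequality and excluding $L\geq\frac{K}{k}$ because the transmission rate would then be zero, while you argue directly that $k\mid K$ makes $K-kL$ a positive multiple of $k$; your explicit use of the divisibility hypothesis is in fact slightly cleaner than the paper's implicit reliance on $L$ and $\frac{K}{k}$ being integers.
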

%The proof of Lemma \ref{lem RK  coding gain} is provided in the extended version in \cite{SRdropbox}.
\begin{proof}
	The coding gain achieved in our scheme is less than that achieved using the RK  scheme, only if
	\begin{align*}
	\frac{K}{K-kL}& >  \frac{2K}{K-kL+k}\\
	\Rightarrow K-kL+k &>   2K-2kL \\
\Rightarrow
	kL+k &> K \\
	\Rightarrow L &> \frac{K}{k}-1
	\end{align*}
	If $L\geq \frac{K}{k}$, then all the users can access all the sub-files of each file and the transmission rate is zero.  Therefore the coding gain achieved in our scheme is greater than or equal to that achieved using the RK  scheme, for the cases considered in Theorem \ref{thm1}.
\end{proof}
\subsection{Comparison with the SPE Scheme}
For the case when $k=2$, the SPE scheme achieves a coding gain that always exceeds $3$ and approaches $4$ for some values of $K$ and $L$. For our scheme, since $\frac{2K}{K-kL+k}$ is an integer (we have assumed that $K-kL+k $ divides $K$), the coding gain achieved when $k=2$ is always greater than or equal to $k+2=4$ (as proved in Lemma \ref{lem NT coding gain}). Also, the sub-packetization level required for SPE scheme is $F_{SPE} =\frac{K(K-2L+2)}{4}$ which is $\frac{(K-2L+2)}{4}$ times  more than the sub-packetization level required for our scheme. Hence, compared to the SPE scheme also, our scheme is better in terms of both coding gain and sub-packetization level for the cases considered in Theorem \ref{thm1}.

\subsection{Comparison with the NK Scheme}
For NK scheme, the rate expression $\mathcal{R}_{NK }$ is given by (\ref{rate NK }). Due to the complexity in the expression, we are not able to compare this with our rate analytically. From Example \ref{exmp macc}, we observe that the transmission rate when $K=12,k=2,L=4$ in the NK scheme is slightly less compared to our scheme. However, we gain in terms of the sub-packetization level required at the expense of a slight increase in the rate. 
The sub-packetization level required for our scheme is $F_{new}=K$ while it  is $F_{NK }=\frac{1}{\gamma } {K-K\gamma (L-1)-1 \choose K\gamma -1}$ for the NK scheme. 

\section{placement and delivery scheme}
\label{proof}
In this section, initially we present our placement scheme. After that we define a new class of PDA which is used for providing our delivery algorithm. Finally
we present our delivery scheme to prove  Theorem \ref{thm1}. 
\subsection{Placement Scheme}
In the placement phase, we split each file $W_n, n=[0,N)$, into $K$ disjoint sub-files, $W_n=\{W_{n,\alpha}: \alpha \in [0,K)\}$. Each cache $C_{\alpha}, \alpha \in [0,K),$ is filled as follows:
$$M_{\alpha}= \{W_{n,(k \alpha +j) \text{ mod } K}:  j \in [0,k),   n \in [0,N)\}.$$
 Each cache stores $k $ sub-files from all the files, where each sub-file is of size $\frac{1}{K}$. Hence, $M=\frac{k N}{K} =N \gamma $, thus meeting our memory constraint.

The placement is done in such a way that we first create a list of size $1 \times kK$ by repeating the sequence $\{0, 1, \ldots, K-1 \}$, $k$ times, i.e., $\{0,1, \ldots,K-1,0,1,\ldots,K-1,0,1,2,\ldots, K-1,\ldots\}.$ We fill the caches by taking $k$ items sequentially from the list.  Hence, the first cache is filled with the first $k$ items, the second cache with the next $k$ items and so on. 

%Since $k | K$, for each $j \in [0,\frac{K}{k})$ all the caches in the set $\mathcal{C}_j$ store the same content, where $$\mathcal{C}_j =\left \{C_{\frac{K}{k}i+j} :  i \in [0,k) \right \}.$$
Each user can access $L$ neighboring caches and each cache stores $k$ consecutive sub-files of each file. 
If $L \geq \left \lceil \frac{K}{k} \right \rceil$, then the user has access to all the sub-files of each of the files.
For the case under consideration, each user has access to $kL$ consecutive sub-files of each file since the content in any consecutive $L$ caches are disjoint from one another if $L < \left \lceil\frac{K}{k} \right \rceil$.
That is, for each user $U_{\alpha} , \alpha\in [0,K)$, the  accessible cache content is $\{W_{n,(k \alpha +i) \text{ mod } K} : i \in [0, kL),   n \in [0,N)\}$.		

Each user's demand of one file among the $N$ files from the central server is revealed after the placement phase, i.e, the demand vector ${\bf d}$ is revealed. Our delivery scheme is based on the concept of PDA. So, before presenting the delivery scheme we provide a review on PDAs and after that we define a new class of PDAs which is required for our delivery scheme. 

\subsection{Review on Placement Delivery Arrays}

\begin{defn}	\label{def: PDA}
	(\cite{Yan2017PDA}) For positive integers $K,F,Z$ and $S$, an $F \times K$ array $\textbf{P} = (p_{i,j}), i \in [0, F ), j \in [0, K )$, composed of a specific symbol ``$\star$'' called star and $S$ positive integers 
	$0,1,\ldots ,S-1,$ is called a $(K,F,Z,S)$ placement delivery array (PDA) if it satisfies the following three conditions.
	
	\begin{enumerate}[label={C\arabic*:}]
	
		\item the symbol $\star$ occurs exactly $Z$ times in each column
		\item  each integer occurs at least once in the array
		\item for any two distinct entries $p_{i_1,j_1}$ and $p_{i_2,j_2}$, such that $p_{i_1,j_1}=p_{i_2,j_2}=s$ is an integer only if 
		\begin{enumerate}
			\item $i_1 \neq i_2,j_1 \neq j_2,$ i.e., they lie in distinct rows and distinct columns and
			\item $p_{i_1,j_2}=p_{i_2,j_1}=\star$, i.e., the corresponding $2 \times 2$ sub-array formed by the rows $i_1$, $i_2$ and the columns $j_1, j_2$ must be of the following form
				 $ \left( \begin{array}{cc}
			s & \star \\
			\star & s \end{array}  \right) \text{ or } \left( \begin{array}{cc}
			\star & s \\
			s & \star \end{array}  \right) $
		\end{enumerate}
	\end{enumerate}
\end{defn}
\begin{defn}(\cite{Yan2017PDA})
	\label{def regular PDA}
	An array \textbf{P} is said to be a $g$-regular $(K, F, Z, S)$ PDA, if it satisfies the conditions \textit{C1, C3,} and the following condition
	\begin{enumerate}[label={$C2'$:}]
		\item Each integer appears $g$ times in \textbf{P} where $g$ is a constant.
	\end{enumerate}

\end{defn} 
\begin{thm} (\cite{Yan2017PDA})\label{thm PDA delivery}
	For a given $(K,F,Z,S)$ PDA, $\textbf{P} = (p_{i,j}), i \in [0, F ), j \in [0, K )$, we can obtain a ($K,M,N$) coded caching scheme having sub-packetization level $F$ with $\frac{M}{N} =\frac{Z}{F}$ using Algorithm \ref{algo1}. For any demand vector $\textbf{d},$ the demands of all the users are met at the transmission rate of $\mathcal{R}=\frac{S}{F}$.
\end{thm}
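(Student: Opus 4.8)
The statement to prove is Theorem~\ref{thm PDA delivery}: given a $(K,F,Z,S)$ PDA $\textbf{P}$, construct a $(K,M,N)$ coded caching scheme with $\frac{M}{N}=\frac{Z}{F}$ achieving rate $\mathcal{R}=\frac{S}{F}$. Let me think about how the PDA structure translates into placement and delivery.

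The plan is to interpret the rows of $\textbf{P}$ as indexing the $F$ sub-packets of each file and the columns as indexing the $K$ users (equivalently caches in the dedicated-cache setting). For the placement phase, I would fill user $j$'s cache with exactly those sub-packets whose row position carries a star in column $j$: that is, store $W_{n,i}$ for every $i$ with $p_{i,j}=\star$, over all files $n$. Condition C1 guarantees each column has exactly $Z$ stars, so each user stores $Z$ sub-packets per file, giving memory ratio $\frac{Z}{F}$ and matching the claimed $\frac{M}{N}$.

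For the delivery phase, once the demand vector $\textbf{d}$ is revealed, I would transmit one coded symbol for each integer $s\in\{0,1,\ldots,S-1\}$ appearing in $\textbf{P}$. For a fixed $s$, collect every cell $(i,j)$ with $p_{i,j}=s$ and transmit the XOR $\bigoplus_{p_{i,j}=s} W_{d_j,i}$, i.e., the XOR of the demanded sub-packets indexed by the rows where $s$ occurs, one term per column in which $s$ sits. Condition C2 ensures each of the $S$ integers appears at least once, so there are exactly $S$ transmissions, each of size $\frac{1}{F}$ of a file, yielding $\mathcal{R}=\frac{S}{F}$.

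The heart of the argument, and the step I expect to be the main obstacle, is verifying decodability: that each user $U_j$ can recover every demanded sub-packet $W_{d_j,i}$ for which $p_{i,j}$ is an integer (the starred sub-packets it already has cached). Fix such a cell $(i,j)$ with $p_{i,j}=s$. The transmission for $s$ contains $W_{d_j,i}$ XORed with terms $W_{d_{j'},i'}$ coming from the other cells $(i',j')$ where $s$ appears. I would invoke condition C3: any two distinct cells holding the same integer $s$ must lie in distinct rows and columns, and the complementary cells $(i,j')$ and $(i',j)$ must both be stars. The consequence is that for every interfering term $W_{d_{j'},i'}$, we have $i'\neq i$ and $p_{i',j}=\star$, so user $U_j$ has $W_{d_{j'},i'}$ cached and can subtract it out. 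Hence $U_j$ isolates $W_{d_j,i}$ from the single transmission indexed by $s$. Repeating over all integer-valued cells in column $j$ recovers exactly the non-cached sub-packets of $W_{d_j}$, completing the demand. The remaining verification is purely bookkeeping: combining cached and decoded parts gives all $F$ sub-packets of the desired file for every user, which establishes correctness and the claimed rate.
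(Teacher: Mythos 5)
Your proposal is correct and follows essentially the same argument as the paper: placement of the starred sub-packets per column (condition \textit{C1} giving $\frac{M}{N}=\frac{Z}{F}$), one XOR transmission per integer $s$ (condition \textit{C2} giving $\mathcal{R}=\frac{S}{F}$), and decodability via condition \textit{C3}, since the complementary cells $(i,j')$ and $(i',j)$ being stars means every interfering term is cached by user $U_j$ and can be cancelled. This matches Algorithm~\ref{algo1} and the paper's accompanying explanation exactly, so there is nothing to add.
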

In a $(K, F, Z, S)$ PDA \textbf{P}, rows correspond to the packets of each file and columns correspond to the users. In any column $j \in [0,K)$, if $ p_{i,j} = \star,$ then user $U_j$ has access to the $i^{th}$ packet of all the files. If $ p_{i,j} = s$ is an integer, then it implies the $i^{th}$ packet of none of the files is accessible to the user $U_j$. The condition \textit{C1} implies some $Z$ packets of all the files are accessible to each user. The condition \textit{C2} implies that the number of symbols transmitted by the server is exactly $S$ since XOR of the requested packets indicated by $s$ is broadcast by the server during the slot for $s$. Hence the transmission rate is $\frac{S}{F}$. The condition \textit{C3} makes sure that each user can get the demanded packet, since all the other packets in the coded symbol are available at its cache.
The three conditions in the Defintion \ref{def: PDA} of the PDA guarantees all the users retrieve the requested files.
\begin{algorithm}
	\caption{Coded caching scheme based on PDA in \cite{Yan2017PDA}}
	\label{algo1}
	\begin{algorithmic}[1]
		\Procedure{PLACEMENT}{$\textbf{P},\mathcal{W}$}
		\State Split each file $W_n \in \mathcal{W} $ into $F$ packets, i.e., $W_n=\{W_{n,i} : i \in [0,F)\}$
		\For{\textit{$j \in [0,K)$}}
		\State \textit{$\mathcal{Z}_j \gets \{W_{n,i} : p_{i,j}=\star, \forall n \in [0,N)\}$}
		\EndFor
		\EndProcedure
		\Procedure{DELIVERY}{$\textbf{P},\mathcal{W}, \textbf{d}$}
		\For{\textit{$s = 0,1,\ldots, S-1$}}
		\State Server sends $\mathcal{Y}_s=\bigoplus_{p_{i,j}=s,i\in [0,F),j \in [0,K)}W_{d_j,i}$
		\EndFor
		\EndProcedure
	\end{algorithmic}
\end{algorithm}
\subsection{A New Class of PDAs }
In this sub-section we introduce a new class of PDA, termed as $t$-cyclic $g$-regular $(K,F,Z,S)$ PDA which is defined in Definition \ref{def new PDA}.
\begin{defn}($t$-cyclic $g$-regular $(K,F,Z,S)$ PDA) \label{def new PDA}
	In a $g$-regular $(K,F,Z,S)$ PDA \textbf{P}, if all the $Z$ stars in each column occur in consecutive rows and if the position of stars in each column in \textbf{P} is obtained by cyclically shifting the previous column downwards by $t$ units, then it is called as \textit{$t$-cyclic $g$-regular $(K,F,Z,S)$ PDA}. 
\end{defn}
%%%%%%%%%%%%%%%%%%%%5
\begin{example} \label{exmp: PDA 12x6}
	Consider the following $12 \times 6$ array $\textbf{P}_{12 \times 6}$. 
%	\begin{equation}\label{PDA 12x6}
%	\textbf{P}_{12 \times 6}= \left( \begin{array}{ccc|ccc}
%	\star & 0&1&\star&\star&\star  \\
%	\star & 3&4&\star&\star&\star \\
%	\star & \star&2&0&\star&\star\\
%	\star & \star&5&3&\star&\star \\
%	\star & \star&\star&1&2&\star \\
%	\star & \star&\star&4&5&\star  \\ \hline
%	\star & \star&\star&\star&0&1  \\
%	\star & \star&\star&\star&3&4 \\
%	0 & \star&\star&\star&\star&2  \\
%	3 & \star&\star&\star&\star&5  \\
%	1 & 2&\star&\star&\star&\star  \\
%	4 & 5&\star&\star&\star&\star 
%	\end{array} \right ) 
%	\end{equation}
\begin{equation}\label{PDA 12x6}
	\textbf{P}_{12 \times 6}= 
\left( \begin{array}{ccc|ccccc}
		\star &0 & 1& \star& \star &\star \\
	\textcolor{red}{\star} &\textcolor{red}{3} & \textcolor{red}{4}& \textcolor{red}{\star}& \textcolor{red}{\star} &\textcolor{red}{\star} \\ 
		\star & \star & 2 &0 &\star &\star\\
		\textcolor{red}{\star} & \textcolor{red}{\star} & \textcolor{red}{5} &\textcolor{red}{3} &\textcolor{red}{\star} &\textcolor{red}{\star}\\ 
		\star & \star & \star& 1&2&\star \\ 
		\textcolor{red}{\star} & \textcolor{red}{\star} & \textcolor{red}{\star}& \textcolor{red}{4}&\textcolor{red}{5}&\textcolor{red}{\star} \\  \hline
		\star &\star & \star &\star &0&1\\
		\textcolor{red}{\star} &\textcolor{red}{\star} & \textcolor{red}{\star} &\textcolor{red}{\star} &\textcolor{red}{3}&\textcolor{red}{4}\\ 
		0&\star&\star &\star&\star &2\\
			\textcolor{red}{3}&\textcolor{red}{\star}&\textcolor{red}{\star} &\textcolor{red}{\star}&\textcolor{red}{\star}& \textcolor{red}{5}\\
		1&2&\star &\star &\star &\star\\
			\textcolor{red}{4}&\textcolor{red}{5}&\textcolor{red}{\star} &\textcolor{red}{\star} &\textcolor{red}{\star} &\textcolor{red}{\star}
		\end{array}\right)
		\end{equation}
	
	There are $Z=8$ stars in each column and there are $6$ integers in the array $\textbf{P}_{12 \times 6}$ where each integer $s \in [0,5]$ appears $4$ times. For each integer $s \in [0,5]$, it can be verified that the condition \textit{C3} in Definition \ref{def: PDA} is also satisfied by the array $\textbf{P}_{12 \times 6}$. Hence  the array $\textbf{P}_{12 \times 6}$ represents a $4$-regular $(6,12,8,6)$ PDA. It can be observed that all the $8$ stars in each column occur in consecutive rows and the position of stars in each column can be obtained by cyclically shifting the previous column by $2$ units down. Therefore, it is a 2-cyclic 4-regular $(6,12,8,6)$ PDA.
\end{example}

Our delivery algorithm is based on using $t$-cyclic $g$-regular PDAs. Before providing the general delivery algorithm, we illustrate the idea of using $t$-cyclic $g$-regular PDAs  for solving multi-access coded caching problems with the help of an example.

\begin{example} \label{exmp macc}
	Consider the case $N=K=12,k=2,L=4$. The server has access to $12$ files: $\mathcal{W}=\{W_n : n \in [0,12)\}$, and each file $W_n,n\in [0,12)$, is divided into $12$ sub-files: $W_n=\{W_{n,j} : j \in [0,12)\}$. Each cache $C_{\alpha}, \alpha \in [0,12),$ is filled as $M_{\alpha}= \{W_{n,(2 \alpha +j) \text{ mod } 12}:  j \in \{0,1\},   n \in [0,12)\}$.
%		\begin{align*}
%		M_0 =& \{W_{n,0}, W_{n,1}: n \in [0,12)\}&		M_6 =& \{W_{n,0}, W_{n,1}: n \in [0,12)\}\\	
%		M_1 =& \{W_{n,2},W_{n,3} : n \in [0,12)\}&        	M_7 =& \{W_{n,2},W_{n,3} : n \in [0,12)\}\\
%		M_2 =& \{W_{n,4},W_{n,5} : n \in [0,12)\}&	      	M_8 =& \{W_{n,4},W_{n,5} : n \in [0,12)\}\\
%		M_3 =& \{W_{n,6},W_{n,7} : n \in [0,12)\}&	M_9 =& \{W_{n,6},W_{n,7} : n \in [0,12)\}\\
%		M_4 =& \{W_{n,8}, W_{n,9} : n \in [0,12)\}&	M_{10} =& \{W_{n,8}, W_{n,9} : n \in [0,12)\}\\
%		M_5 =& \{W_{n,10},W_{n,11} : n \in [0,12)\}&	M_{11} =& \{W_{n,10},W_{n,11} : n \in [0,12)\}
%		\end{align*}
	
	Each user $U_{\alpha}, \alpha \in [0,12)$, has access to all the caches in the set $\{C_{\alpha},C_{(\alpha+1) \text{ mod } 12},C_{(\alpha+2) \text{ mod } 12},C_{(\alpha+3) \text{ mod } 12}\}$.
	Hence, for each user $U_{\alpha},$ the accessible cache content is $\mathcal{Z}_{\alpha} =\{W_{n,(2\alpha + i) \text{ mod } 12} : i \in [0,8),n \in [0,12)\}$.

	Consider the $2$-cyclic $4$-regular $(6,12,8,6)$ PDA, $\textbf{P}_{12 \times 6}$, obtained in Example \ref{exmp: PDA 12x6}. 		 
	For the array $\textbf{P}_{12 \times 6}$ with alphabets $[0, 6) \cup \star$, define an array $\textbf{P}_{12 \times 6}+6 = (p_{i,j}+6)$, where $ \star +6= \star.$   The number of occurrences of each integer in $\textbf{P}_{12 \times 6}+6$ is exactly equal to that of $\textbf{P}_{12 \times 6}$. Also, the number of stars in each column in $\textbf{P}_{12 \times 6}+6$ is exactly same as in $\textbf{P}_{12 \times 6}$.		
	Now, we construct a new array
	
	\begin{align}
	& \textbf{P}_{12 \times 12} =\left( \textbf{P}_{12 \times 6} \hspace{0.5cm} \textbf{P}_{12 \times 6}+6   \right) \nonumber \\
	&= \left( \begin{array}{ccc|ccc|ccc|ccc}
	\star & 0&1&\star&\star&\star & \star & 6&7&\star&\star&\star \\
	\star & 3&4&\star&\star&\star &  \star & 9&10&\star&\star&\star\\
	\star & \star&2&0&\star&\star & 	 \star & \star&8&6&\star&\star \\
	\star & \star&5&3&\star&\star & 	 \star & \star&11&9&\star&\star \\
	\star & \star&\star&1&2&\star & 	 \star & \star&\star&7&8&\star \\
	\star & \star&\star&4&5&\star & 	 \star & \star&\star&10&11&\star \\ \hline
	\star & \star&\star&\star&0&1 & 	 \star & \star&\star&\star&6&7 \\
	\star & \star&\star&\star&3&4 &	 \star & \star&\star&\star&9&10 \\
	0 & \star&\star&\star&\star&2 & 	 6 & \star&\star&\star&\star&8 \\
	3 & \star&\star&\star&\star&5 &	 9 & \star&\star&\star&\star&11 \\
	1 & 2&\star&\star&\star&\star &	 7 & 8&\star&\star&\star&\star \\
	4 & 5&\star&\star&\star&\star & 	 10 & 11&\star&\star&\star&\star 
	\end{array} \right ) \label{exmp final matrix }
	\end{align}
	There are $12$ rows, $0,1,2,\ldots,11$, and $12$ columns, $0,1,2,\ldots,11$, in the PDA $\textbf{P}_{12 \times 12}$.  The rows correspond to the sub-files of each file and columns correspond to the users. That is, the $i^{th}$ row, $i \in [0,12)$, represents the sub-files $W_{n,i}, \forall n \in [0,12)$, while the  $j^{th}$ column, $j \in [0,12)$, represents the user $U_{j}$. 
	There are a total of $12$ integers in $\textbf{P}_{12 \times 12}$, and each integer occurs $4$ times in $\textbf{P}_{12 \times 12}$. All the conditions $\textit{C1},\textit{C2}'$ and \textit{C3} are satisfied by the array $\textbf{P}_{12 \times 12}$.  Additionally, all the $8$ stars in each column occur in consecutive rows and the position of stars in each column can be obtained by cyclically shifting the previous column by $2$ units down. Therefore, the array $\textbf{P}_{12 \times 12}$ represents a $2$-cyclic $4$-regular $(12,12,8,12)$ PDA. 
	
	Let the demand vector be ${\bf d}=(d_0,d_1,d_2,d_3,d_4,d_5,d_6,d_7,d_8,d_9,d_{10},d_{11})$. 
	We obtain a delivery scheme for the multi-access coded caching scheme with $K=N=12,k=2,L=4$ and sub-packetization level $F=12$, using the caching scheme given by Algorithm \ref{algo1} based on the PDA $ \textbf{P}_{12 \times 12}$. The following are the coded symbols obtained using Algorithm \ref{algo1}:
	
	\begin{align*}
	\mathcal{Y}_0 &= W_{d_0,8} \oplus W_{d_1,0} \oplus W_{d_3,2} \oplus W_{d_4,6}\\
	\mathcal{Y}_1 &= W_{d_0,10} \oplus W_{d_2,0} \oplus W_{d_3,4} \oplus W_{d_5,6}\\		
	\mathcal{Y}_2 &= W_{d_1,10} \oplus W_{d_2,2} \oplus W_{d_4,4} \oplus W_{d_5,8}\\	
	\mathcal{Y}_3 &= W_{d_0,9} \oplus W_{d_1,1} \oplus W_{d_3,3} \oplus W_{d_4,7}\\		
	\mathcal{Y}_4 &= W_{d_0,11} \oplus W_{d_2,1} \oplus W_{d_3,5} \oplus W_{d_5,7}\\	
	\mathcal{Y}_5 &= W_{d_1,11} \oplus W_{d_2,3} \oplus W_{d_4,5} \oplus W_{d_5,9}\\		
	\mathcal{Y}_6 &= W_{d_6,8} \oplus W_{d_7,0} \oplus W_{d_9,2} \oplus W_{d_{10},6}\\
	\mathcal{Y}_7 &= W_{d_6,10} \oplus W_{d_8,0} \oplus W_{d_9,4} \oplus W_{d_{11},6}\\
	\mathcal{Y}_8 &= W_{d_7,10} \oplus W_{d_8,2} \oplus W_{d_{10},4} \oplus W_{d_{11},8}\\
	\mathcal{Y}_9 &= W_{d_6,9} \oplus W_{d_7,1} \oplus W_{d_9,3} \oplus W_{d_{10},7}\\		
	\mathcal{Y}_{10} &= W_{d_6,11} \oplus W_{d_8,1} \oplus W_{d_9,5} \oplus W_{d_{11},7}\\		
	\mathcal{Y}_{11} &= W_{d_7,11} \oplus W_{d_8,3} \oplus W_{d_{10},5} \oplus W_{d_{11},9}.			
	\end{align*}
	Hence, the PDA $\textbf{P}_{12 \times 12}$ characterizes the delivery scheme for this example.
	For the caching scheme generated by a $2$-cyclic $4$-regular $(12,12,8,12)$ PDA using Algorithm \ref{algo1}, the coding gain equals to $4,$ since each transmission benefits $4$ users. The coding gain achieved using the NT and RK schemes for this example is $3$ while it is $4$ for SPE  scheme. The sub-packetization level required for NT, RK and SPE schemes are $180,30$ and $18$  respectively. The transmission rate achieved using our scheme is $1$ while the transmission rate achieved using the NK scheme is $0.755$. However the sub-packetization level required for the NK scheme is $30$
	while for our scheme the sub-packetization level required is $12$ which is less than half of that value.	
\end{example}

\subsection{Delivery Scheme}
In this section we provide the delivery scheme to prove Theorem \ref{thm1}.
The parts of the file $W_{{d_{\alpha}}}$ available with the user $U_{\alpha}, \alpha \in [0,K)$, are $kL$ consecutive sub-files, i.e.,  $\{W_{d_{\alpha},(k \alpha+i) \text{ mod } K} : i \in [0,kL)\}$. Hence,  the user $U_{\alpha}$ should be able to decode all the remaining $K-kL$ sub-files, i.e.,  $\{W_{d_{\alpha},(k \alpha+kL+i) \text{ mod } K} : i \in [0,K-k L)\}$. To retrieve the remaining sub-files, the transmitted symbols are obtained using the PDA constructed in Algorithm \ref{algo2} for the case discussed in Theorem \ref{thm1}.
\begin{thm} \label{thm valid pda}
	The $K \times K$ matrix ${\bf P}$ obtained using Algorithm \ref{algo2} is a $k$-cyclic $\frac{2K}{K-kL+k}$-regular $(K,K,kL,\frac{(K-kL)(K-kL+k)}{2} )$ PDA.
\end{thm}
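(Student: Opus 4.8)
The plan is to verify, in turn, each defining property of a $k$-cyclic $\tfrac{2K}{K-kL+k}$-regular $(K,K,kL,S)$ PDA (Definitions \ref{def: PDA} and \ref{def new PDA}) for the array $\mathbf{P}=(p_{i,j})$, $i,j\in[0,K)$, where rows index sub-files and columns index users. First I would pin down the star pattern directly from the placement phase: user $U_j$ caches exactly the sub-files $\{(kj+i)\bmod K : i\in[0,kL)\}$, so $p_{i,j}=\star$ precisely when $\delta(i,j):=(i-kj)\bmod K\in[0,kL)$, and $p_{i,j}$ is an integer precisely when $\delta(i,j)\in[kL,K)$. This single observation settles three things at once: each column has exactly $Z=kL$ stars (condition \textit{C1}); those stars sit in $kL$ consecutive rows; and the star set of column $j+1$ is that of column $j$ shifted down by $k$ (since $\delta(i,j+1)=\delta(i-k,j)$), which is exactly the $k$-cyclic requirement. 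It also shows every column has precisely $K-kL$ integer cells, so the array contains $K(K-kL)$ integer entries in total.

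Next I would exploit the block structure produced by Algorithm \ref{algo2} and illustrated in Example \ref{exmp macc}: $\mathbf{P}$ is the horizontal concatenation of $k$ column-blocks, each a copy of a common base pattern on $K/k$ columns, with the integer alphabet of successive blocks translated by disjoint constants so that no two blocks share an integer. Because $k\cdot(K/k)\equiv 0 \pmod K$, the star pattern is periodic with period $K/k$ columns, so the block decomposition respects the star/integer partition established above. The key payoff is for condition \textit{C3}: it is triggered only by two cells carrying the \emph{same} integer, and such cells must lie in the same block. Hence it suffices to verify \textit{C3} on one base block, which is the $k$-cyclic structure on $K/k$ consecutive columns.

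For regularity (condition \textit{C2$'$}) I would count. The plan is to show the assignment rule of Algorithm \ref{algo2} places each integer on exactly $g=\tfrac{2K}{K-kL+k}$ cells; here the hypotheses $k\mid K$ and $(K-kL+k)\mid K$ are precisely what make $g$ a positive integer and make the $k$ equal-sized blocks exist. Dividing the total $K(K-kL)$ integer entries by $g$ then forces the number of distinct integers to be $S=\tfrac{K(K-kL)}{g}=\tfrac{(K-kL)(K-kL+k)}{2}$, matching the claimed parameters, and condition \textit{C2} (each integer occurs) is immediate from $g\ge 1$.

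The hard part will be condition \textit{C3}. Given two integer cells $(i_1,j_1)$ and $(i_2,j_2)$ with $p_{i_1,j_1}=p_{i_2,j_2}$, I must establish $i_1\neq i_2$, $j_1\neq j_2$, and that the two cross cells are stars, i.e.\ $\delta(i_1,j_2)\in[0,kL)$ and $\delta(i_2,j_1)\in[0,kL)$. Writing $i_1\equiv kj_1+\delta_1$ with $\delta_1=\delta(i_1,j_1)\in[kL,K)$, the first cross condition becomes the arithmetic membership $\big(k(j_1-j_2)+\delta_1\big)\bmod K\in[0,kL)$, and symmetrically for the other. I expect the crux to be showing that Algorithm \ref{algo2} only ever co-labels cells whose column gaps and offsets make all such memberships hold simultaneously across every pair in a group of $g$ cells; I anticipate this reducing to a case analysis on the offsets modulo $k$ and on the cyclic distances between the participating columns, using $L<\lceil K/k\rceil$ (so the cached intervals of distinct users are genuinely disjoint) to rule out same-row/same-column coincidences and to force the cross positions into the star range. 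Verifying this pairwise compatibility for every group is the technical heart of the argument, and once it is in place the remaining conditions follow from the counting above.
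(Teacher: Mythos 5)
There is a genuine gap: the two properties that carry all the content of Theorem \ref{thm valid pda} --- regularity (\textit{C2$'$}) and condition \textit{C3} --- are announced as plans but never actually proven. For regularity you write ``the plan is to show the assignment rule of Algorithm \ref{algo2} places each integer on exactly $g=\tfrac{2K}{K-kL+k}$ cells,'' and you then derive $S$ by dividing the total count of integer cells by $g$; but $g$ is exactly what needs to be established, and it does not follow from any counting you have done. In the paper this comes from the explicit structure of procedure 2: each integer in $[0,S_1)$ occurs exactly once in ${\bf A}$ and once in ${\bf A}^{T}$, and each of these blocks appears exactly $\tfrac{K}{K-kL+k}$ times in ${\bf P}_1$ (once per column block), giving $g=\tfrac{2K}{K-kL+k}$; procedures 3 and 4 then preserve this because the added rows and the concatenated copies use translated, disjoint alphabets. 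Similarly, for \textit{C3} you correctly reduce to a single base block (same integer implies same block, since the alphabets $[(t-1)\tilde S_1,t\tilde S_1)$ are disjoint) and correctly identify the membership conditions $\delta(i_1,j_2),\delta(i_2,j_1)\in[0,kL)$ that must hold, but you stop at ``I anticipate this reducing to a case analysis.'' That case analysis is precisely the technical heart, and the paper resolves it not by modular arithmetic on offsets but structurally: since ${\bf A}$ has all integers strictly above the diagonal and ${\bf A}^{T}$ strictly below, an integer $s$ at position $(i,j)$, $j>i$, of ${\bf A}$ meets its other occurrences only inside sub-matrices of the form $\left({\bf A}\ \ {\bf A}^{T}\right)$, $\bigl(\begin{smallmatrix}{\bf A}^{T}\\ {\bf A}\end{smallmatrix}\bigr)$, or the adjacent-row-block pattern of (\ref{submatrix adjacent}), and in each case the cross positions are forced to be stars (e.g.\ the $2\times2$ pattern with star cells $\tilde a_{j,j}$ and $\tilde a_{i,i+\frac{K-kL+k}{k}}$, and the $4\times4$ all-star-off-diagonal pattern for adjacent blocks). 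Without carrying out either this structural argument or your proposed modular case analysis, the proof is incomplete exactly where PDAs are hard.

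A secondary, fixable issue is the direction of your opening step: you read the star pattern of ${\bf P}$ off the placement phase ($p_{i,j}=\star$ iff $(i-kj)\bmod K\in[0,kL)$), but the theorem is a statement about the output of Algorithm \ref{algo2}, so this pattern must be derived from the construction, not assumed from what the placement requires. The paper does this by counting stars per column of ${\bf A}$ and ${\bf A}^{T}$ (giving $L$ stars per column of ${\bf P}_1$, consecutive and $1$-cyclic by the block-shift structure), then showing procedure 3 triples nothing but replicates star rows, yielding $kL$ consecutive stars shifted by $k$ per column, and procedure 4 preserves the pattern with period $\tfrac{K}{k}$ because $k\,|\,K$. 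Your observations about \textit{C1}, $k$-cyclicity, and the value of $S$ are all consistent with the paper's conclusions, but as written they rest on the unproven star-pattern claim and the unproven value of $g$.
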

%\begin{proof}
	The proof of Theorem \ref{thm valid pda} is provided in Section IV.
%\end{proof}
\begin{thm} \label{thm delivery proof}
	For a given $k$-cyclic $\frac{2K}{K-kL+k}$-regular $(K,K,kL,\frac{(K-kL)(K-kL+k)}{2})$ PDA, $\textbf{P} = (p_{i,j}), i \in [0, K ), j \in [0, K )$ constructed using Algorithm \ref{algo2}, we can obtain a delivery algorithm using Algorithm \ref{algo1} for the considered case of multi-access coded problem in Theorem \ref{thm1}, with sub-packetization level $F=K$. For any demand vector $\textbf{d},$ the demands of all the users are met at the transmission rate of $\mathcal{R}_{new}=\frac{(K-kL)(K-kL+k)}{2K}$.
\end{thm}
%\begin{proof}
%The proof of Theorem \ref{thm delivery proof} is provided in Section III in the extended version in \cite{SRdropbox}.
%\end{proof}
\begin{proof}
	The proof of this theorem is straightforward from Theorem \ref{thm PDA delivery}. In a $t$-cyclic $g$-regular $(K,F,Z,S)$ PDA \textbf{P}, rows correspond to the sub-files of each file and columns correspond to the users. In any column $j \in [0,K)$, if $ p_{i,j} = \star,$ then user $U_j$ has access to the $i^{th}$ sub-file of all the files. If $ p_{i,j} = s$ is an integer, then it implies the $i^{th}$ sub-file of none of the files is accessible to the user $U_j$. The condition \textit{C1} implies some $Z$ sub-files of all the files are accessible to each user. The condition \textit{C2} implies that the number of symbols transmitted by the server is exactly $S$ since XOR of the requested packets indicated by $s$ is broadcast by the server during the slot for $s$. Hence the transmission rate is $\frac{S}{F}$. The condition \textit{C3} makes sure that each user can get the demanded sub-file, since all the other sub-files in the coded symbol are available at its cache.
	The three conditions in the Defintion \ref{def: PDA} of the PDA guarantees all the users retrieve the requested files.
\end{proof}
\begin{algorithm}
	\caption{$k$-cyclic $\frac{2K}{K-kL+k}$-regular $(K,K,kL,\frac{(K-kL)(K-kL+k)}{2})$ PDA {\bf P} Construction, where  $\gamma \in \left \{\frac{k}{K}: k | K, (K-kL+k)|K, k\in \left [1,K \right ]\right \}$}
	\label{algo2}
	\begin{algorithmic}[1]
		\Procedure{{\bf 1}: Construct a square matrix $ \textbf{A} =(a_{i,j}), i , j \in \left[0,\frac{K-kL}{k} \right ].$}{}
%\State Construct a square matrix $ \textbf{A} =(a_{i,j}), i , j \in \left[0,\frac{K-kL}{k} \right ]$ as follows:
	\For{\textit{$i \in \left [0,\frac{K-kL}{k} \right ]$}}
\For{\textit{$j \in \left [0,\frac{K-kL}{k} \right ]$}}
\If{$j\leq i$}
\State $a_{i,j} =\star$
\ElsIf{$i=0$ and $j>i$}
\State $a_{i,j} =j-1$
\Else
\State $a_{i,j} =a_{i-1,j} + \left (\frac{K-kL}{k}-i \right )$
\EndIf
\EndFor
\EndFor
\EndProcedure \textbf{ 1}
%\State Define a matrix $ \textbf{A}^{t} =(a^{t}_{i,j}), i, j \in \left[0,\frac{K-kL}{k} \right ]$, where $a^{t}_{i,j}=a_{j,i}$ for each $i , j \in \left[0,\frac{K-kL}{k} \right ]$.
\Procedure{{\bf 2}: Obtain a square matrix $ {\bf P}_1 =(p_{i,j}),  i,j \in [0,\frac{K}{k})$.}{}
{\small$$
\textbf{P}_1 = \left ( \begin{array}{ccccccc}
{\bf A}&{\bf A}^{T} & {\bf X}&{\bf X}&\ldots &{\bf X}&{\bf X}\\
{\bf X}&{\bf A}&{\bf A}^{T} &{\bf X}&\ldots & {\bf X}&{\bf X}\\
{\bf X}&{\bf X}&{\bf A}&{\bf A}^{T} &\ldots & {\bf X}&{\bf X}\\
\vdots &\vdots &\vdots &\vdots &\vdots &\ldots &\vdots \\
{\bf X}&{\bf X}&  {\bf X}&{\bf X}&\ldots &{\bf A}&{\bf A}^{T}\\
{\bf A}^T&{\bf X}&{\bf X}&{\bf X} &\ldots & {\bf X}&{\bf A}\\
\end{array}\right )$$}
where $ {\bf X}$ represents a $\frac{K-kL+k}{k} \times \frac{K-kL+k}{k}$ matrix with all the entries being $\star$. The matrix $ {\bf P}_1$ is a block matrix having $\frac{K}{K-kL+k}$ row and column blocks with blocks of size $\frac{K-kL+k}{k} \times \frac{K-kL+k}{k}$ as entries.
\EndProcedure \textbf{ 2}
\Procedure{{\bf 3}: From the matrix ${\bf P}_1$, generate a  $K \times \frac{K}{k}$  matrix $\tilde{\textbf{P}}_1 = (\tilde{p}_{i,j}) , i \in [0,K),j\in [0,\frac{K}{k}).$}{}
	\State Let $S_1=\left ( \frac{(K-kL)(K-kL+k)}{2k^2} \right ).$
\For{\textit{$i \in \left [0,K \right )$}}
\For{\textit{$j \in \left [0,\frac{K}{k} \right )$}}
\If{$k|i$}
\State $\tilde{p}_{i,j} =p_{(\frac{i}{k}),j}. $
\Else
\State $\tilde{p}_{i,j} =\tilde{p}_{i-1,j}  + S_1,$ where $S_1 +\star =\star.$
\EndIf
\EndFor
\EndFor
\EndProcedure \textbf{ 3}
\Procedure{{\bf 4}: Now, obtain a $K \times K$ matrix $\textbf{P}$}{}
$$
\textbf{P} =  \left ( \begin{array}{cccccc}
\tilde{\textbf{P}}_1 & \tilde{\textbf{P}}_2 & \tilde{\textbf{P}}_3 &\ldots &\tilde{\textbf{P}}_k
\end{array}\right )$$
where each matrix $ \tilde{\textbf{P}}_t =\tilde{\textbf{P}}_1+(t-1)\tilde{S}_1,t \in  [2,k],  \tilde{S}_1=\frac{(K-kL)(K-kL+k)}{2k},$ is a $K \times \frac{K}{t}$ matrix, with $ \star+(t-1)\tilde{S}_1 =\star$.
\EndProcedure \textbf{ 4}
	\end{algorithmic}
\end{algorithm}
%\begin{algorithm}
%	\caption{Construction of the $\frac{K-kL+k}{k} \times \frac{K-kL+k}{k}$ matrix \textbf{A} required for Algorithm \ref{algo2}.}
%	\label{algo3}
%	\begin{algorithmic}[1]
%		\For{\textit{$i \in \left [0,\frac{K-kL}{k} \right ]$}}
%		\For{\textit{$j \in \left [0,\frac{K-kL}{k} \right ]$}}
%		\If{$j\leq i$}
%		\State $a_{i,j} =\star$
%		\ElsIf{$i=0$ and $j>i$}
%		\State $a_{i,j} =j-1$
%		\Else
%		\State $a_{i,j} =a_{i-1,j} + \left (\frac{K-kL}{k}-i \right )$
%		\EndIf
%		\EndFor
%		\EndFor
%	\end{algorithmic}
%\end{algorithm}
%\begin{algorithm}
%	\caption{Generation of a $K \times \frac{K}{k}$ matrix $\tilde{\textbf{P}}_1$ matrix required for step 3 in Algorithm \ref{algo2} from the matrix  $\textbf{P}_1$ obtained in step 2.}
%	\label{algo4}
%	\begin{algorithmic}[1]
%		\State Let $S_1=\left ( \frac{(K-kL)(K-kL+k)}{2k^2} \right ).$
%		\For{\textit{$i \in \left [0,K \right )$}}
%			\For{\textit{$j \in \left [0,\frac{K}{k} \right )$}}
%				\If{$k|i$}
%					\State $\tilde{p}_{i,j} =p_{(\frac{i}{k}),j}. $
%				\Else
%					\State $\tilde{p}_{i,j} =\tilde{p}_{i-1,j}  + S_1,$ where $S_1 +\star =\star.$
%				\EndIf
%			\EndFor
%		\EndFor
%	\end{algorithmic}
%\end{algorithm}
\begin{example}
	Consider Example \ref{exmp macc} with $K=N=12,k=2,L=4.$ 
	We illustrate each procedure involved in obtaining the matrix $\textbf{P}_{12 \times 12}$ using Algorithm \ref{algo2} for Example \ref{exmp macc}.
	The ${\bf A}$ matrix obtained by \textbf{procedure 1} in Algorithm \ref{algo2} corresponding to this example is $${\bf A} = \left( \begin{array}{ccccc}
	\star &0 & 1\\
	\star & \star & 2\\
	\star & \star & \star
	\end{array}\right).$$
	The ${\bf P}_1$ matrix obtained by \textbf{procedure 2} in Algorithm \ref{algo2} corresponding to this example is $${\bf P}_1 =\left ( \begin{array}{cccccc}
{\bf A} &{\bf A}^{T} \\
{\bf A}^{T} & {\bf A}
	\end{array} \right ) = \left( \begin{array}{ccc|ccccc}
	\star &0 & 1& \star& \star &\star \\
	\star & \star & 2 &0 &\star &\star\\
		\star & \star & \star& 1&2&\star \\ \hline
		\star &\star & \star &\star &0&1\\
		0&\star&\star &\star&\star &2\\
		1&2&\star &\star &\star &\star
	\end{array}\right).$$
	Using \textbf{procedure 3} in Algorithm \ref{algo2}, we obtain the matrix $\tilde{{\bf P}}_1$ as in (\ref{PDA 12x6}).
%	$$\tilde{{\bf P}}_1=\left( \begin{array}{ccc|ccccc}
%	\star &0 & 1& \star& \star &\star \\
%\textcolor{red}{\star} &\textcolor{red}{3} & \textcolor{red}{4}& \textcolor{red}{\star}& \textcolor{red}{\star} &\textcolor{red}{\star} \\ 
%	\star & \star & 2 &0 &\star &\star\\
%	\textcolor{red}{\star} & \textcolor{red}{\star} & \textcolor{red}{5} &\textcolor{red}{3} &\textcolor{red}{\star} &\textcolor{red}{\star}\\ 
%	\star & \star & \star& 1&2&\star \\ 
%	\textcolor{red}{\star} & \textcolor{red}{\star} & \textcolor{red}{\star}& \textcolor{red}{4}&\textcolor{red}{5}&\textcolor{red}{\star} \\  \hline
%	\star &\star & \star &\star &0&1\\
%	\textcolor{red}{\star} &\textcolor{red}{\star} & \textcolor{red}{\star} &\textcolor{red}{\star} &\textcolor{red}{3}&\textcolor{red}{4}\\ 
%	0&\star&\star &\star&\star &2\\
%		\textcolor{red}{3}&\textcolor{red}{\star}&\textcolor{red}{\star} &\textcolor{red}{\star}&\textcolor{red}{\star}& \textcolor{red}{5}\\
%	1&2&\star &\star &\star &\star\\
%		\textcolor{red}{4}&\textcolor{red}{5}&\textcolor{red}{\star} &\textcolor{red}{\star} &\textcolor{red}{\star} &\textcolor{red}{\star}
%	\end{array}\right)$$ 
	
	Now, using \textbf{procedure 4}, the matrix $\tilde{{\bf P}}_2$ obtained is as follows: $$\tilde{{\bf P}}_2=\tilde{{\bf P}}_1+6=\left( \begin{array}{ccc|ccccc}
	\star &6 & 7& \star& \star &\star \\
	\textcolor{red}{\star} &\textcolor{red}{9} & \textcolor{red}{10}& \textcolor{red}{\star}& \textcolor{red}{\star} &\textcolor{red}{\star} \\ 
	\star & \star & 8 &6 &\star &\star\\
	\textcolor{red}{\star} & \textcolor{red}{\star} & \textcolor{red}{11} &\textcolor{red}{9} &\textcolor{red}{\star} &\textcolor{red}{\star}\\ 
	\star & \star & \star& 7&8&\star \\ 
	\textcolor{red}{\star} & \textcolor{red}{\star} & \textcolor{red}{\star}& \textcolor{red}{10}&\textcolor{red}{11}&\textcolor{red}{\star} \\  \hline
	\star &\star & \star &\star &6&7\\
	\textcolor{red}{\star} &\textcolor{red}{\star} & \textcolor{red}{\star} &\textcolor{red}{\star} &\textcolor{red}{9}&\textcolor{red}{10}\\ 
	6&\star&\star &\star&\star &8\\
	\textcolor{red}{9}&\textcolor{red}{\star}&\textcolor{red}{\star} &\textcolor{red}{\star}&\textcolor{red}{\star}& \textcolor{red}{11}\\
	7&8&\star &\star &\star &\star\\
	\textcolor{red}{10}&\textcolor{red}{11}&\textcolor{red}{\star} &\textcolor{red}{\star} &\textcolor{red}{\star} &\textcolor{red}{\star}
	\end{array}\right)$$ 
	Finally, the matrix ${\bf P}$ obtained by concatenating $\tilde{{\bf P}}_1$ and $\tilde{{\bf P}}_2$ is as in (\ref{exmp final matrix }).
\end{example}
\begin{example} \label{exmp macc 2}
	Consider another example with $N=K=36,k=3,L=9$. The server has access to $36$ files: $\mathcal{W}=\{W_n : n \in [0,36)\}$, and each file $W_n,n\in [0,36)$, is divided into $36$ sub-files: $W_n=\{W_{n,j} : j \in [0,36)\}$. Each cache $C_{\alpha}, \alpha \in [0,36),$ is filled as $M_{\alpha}= \{W_{n,(3 \alpha +j) \text{ mod } 36}:  j \in \{0,1,2\},   n \in [0,36)\}$.
	
	Each user $U_{\alpha}, \alpha \in [0,36)$, has access to all the caches in the set $\{C_{(\alpha+i) \text{ mod } 36} : i \in [0,9)\}$.
	Hence, for each user $U_{\alpha},$ the accessible cache content is $\mathcal{Z}_{\alpha} =\{W_{n,(3\alpha + i) \text{ mod } 36} : i \in [0,27),n \in [0,36)\}$.
	
		The ${\bf A}$ matrix obtained by \textbf{procedure 1} in Algorithm \ref{algo2} corresponding to this example is $${\bf A} = \left( \begin{array}{ccccc}
	\star &0 & 1&2\\
	\star & \star & 3&4\\
	\star & \star & \star&5\\
	\star&\star &\star&\star
	\end{array}\right).$$
	The ${\bf P}_1$ matrix obtained by \textbf{procedure 2} in Algorithm \ref{algo2} corresponding to this example is \begin{align}{\bf P}_1 &=\left ( \begin{array}{cccccc}
	{\bf A} &{\bf A}^{T}& \textbf{X} \\
	 \textbf{X}&{\bf A} &{\bf A}^{T} \\
	{\bf A}^{T} & \textbf{X}& {\bf A}
	\end{array} \right ) \\&= \left( \begin{array}{cccc|cccc|cccc}
	\star &0 & 1& 2&\star& \star &\star&\star&\star&\star&\star&\star \\
	\star & \star & 3 &4&0 &\star &\star &\star &\star&\star&\star&\star\\
	\star & \star & \star& 5&1&3&\star &\star&\star&\star&\star&\star\\ 
	\star &\star & \star &\star &2&4 &5 &\star &\star&\star&\star&\star\\ \hline
	\star&\star&\star&\star&\star&0&1&2&\star&\star&\star&\star\\
	\star&\star&\star&\star&\star&\star&3&4&0&\star&\star&\star\\	\star&\star&\star&\star&	\star & \star & \star& 5&1&3&\star &\star\\ \star&\star&\star&\star&	\star &\star & \star &\star &2&4 &5 &\star \\ \hline
	\star&\star&\star&\star&	\star&\star&\star&\star&\star&0&1&2\\
	0&\star&\star&\star&\star&\star&\star&\star&\star&\star&3&4\\1&3&\star &\star&	\star&\star&\star&\star&	\star & \star & \star& 5\\ 2&4 &5 &\star &\star&\star&\star&\star&	\star &\star & \star &\star \\ 
	\end{array}\right).
	\end{align}
	Using \textbf{procedure 3} in Algorithm \ref{algo2}, we obtain the matrix $\tilde{{\bf P}}_1$ as given below. $\tilde{{\bf P}}_1 =$
	{\small
		\begin{align}  \left( \begin{array}{cccc|cccc|cccc}
		\star &0 & 1& 2&\star& \star &\star&\star&\star&\star&\star&\star \\
		\textcolor{red}{\star} &	\textcolor{red}{6} & 	\textcolor{red}{7}& 	\textcolor{red}{8}&	\textcolor{red}{\star}&	\textcolor{red}{ \star} &	\textcolor{red}{\star}&	\textcolor{red}{\star}&	\textcolor{red}{\star}&	\textcolor{red}{\star}&	\textcolor{red}{\star}&	\textcolor{red}{\star} \\
		\textcolor{blue}{\star} &\textcolor{blue}{12} & \textcolor{blue}{13}& \textcolor{blue}{14}&\textcolor{blue}{\star}&\textcolor{blue}{ \star} &\textcolor{blue}{\star}&\textcolor{blue}{\star}&\textcolor{blue}{\star}&\textcolor{blue}{\star}&\textcolor{blue}{\star}&\textcolor{blue}{\star} \\
		\star & \star & 3 &4&0 &\star &\star &\star &\star&\star&\star&\star\\
		\textcolor{red}{\star} &	\textcolor{red}{ \star} &	\textcolor{red}{ 9} &	\textcolor{red}{10}&	\textcolor{red}{6} &	\textcolor{red}{\star} &	\textcolor{red}{\star} &	\textcolor{red}{\star} &	\textcolor{red}{\star}&	\textcolor{red}{\star}&	\textcolor{red}{\star}&	\textcolor{red}{\star}\\
		\textcolor{blue}{\star} & \textcolor{blue}{\star} & \textcolor{blue}{15} &\textcolor{blue}{16}&\textcolor{blue}{12 }&\textcolor{blue}{\star} &\textcolor{blue}{\star }&\textcolor{blue}{\star} &\textcolor{blue}{\star}&\textcolor{blue}{\star}&\textcolor{blue}{\star}&\textcolor{blue}{\star}\\
		\star & \star & \star& 5&1&3&\star &\star&\star&\star&\star&\star\\ 
		\textcolor{red}{	\star} &	\textcolor{red}{ \star} &	\textcolor{red}{ \star}& 	\textcolor{red}{11}&	\textcolor{red}{7}&	\textcolor{red}{9}&	\textcolor{red}{\star} &	\textcolor{red}{\star}&	\textcolor{red}{\star}&	\textcolor{red}{\star}&	\textcolor{red}{\star}&	\textcolor{red}{\star}\\ 
		\textcolor{blue}{\star} & \textcolor{blue}{\star} & \textcolor{blue}{\star}& \textcolor{blue}{17}&\textcolor{blue}{13}&\textcolor{blue}{15}&\textcolor{blue}{\star} &\textcolor{blue}{\star}&\textcolor{blue}{\star}&\textcolor{blue}{\star}&\textcolor{blue}{\star}&\textcolor{blue}{\star}\\ 
		\star &\star & \star &\star &2&4 &5 &\star &\star&\star&\star&\star\\ 
		\textcolor{red}{\star} &	\textcolor{red}{\star} &	\textcolor{red}{ \star} &	\textcolor{red}{\star} &	\textcolor{red}{8}&	\textcolor{red}{10} &	\textcolor{red}{11 }&	\textcolor{red}{\star} &	\textcolor{red}{\star}&	\textcolor{red}{\star}&	\textcolor{red}{\star}&	\textcolor{red}{\star}\\
		\textcolor{blue}{\star} &\textcolor{blue}{\star} & \textcolor{blue}{\star} &\textcolor{blue}{\star} &\textcolor{blue}{14}&\textcolor{blue}{16} &\textcolor{blue}{17} &\textcolor{blue}{\star} &\textcolor{blue}{\star}&\textcolor{blue}{\star}&\textcolor{blue}{\star}&\textcolor{blue}{\star}\\\hline
		\star&\star&\star&\star&\star&0&1&2&\star&\star&\star&\star\\
		\textcolor{red}{\star}&	\textcolor{red}{\star}&	\textcolor{red}{\star}&	\textcolor{red}{\star}&	\textcolor{red}{\star}&	\textcolor{red}{6}&	\textcolor{red}{7}&	\textcolor{red}{8}&	\textcolor{red}{\star}&	\textcolor{red}{\star}&	\textcolor{red}{\star}&	\textcolor{red}{\star}\\
		\textcolor{blue}{\star}&\textcolor{blue}{\star}&\textcolor{blue}{\star}&\textcolor{blue}{\star}&\textcolor{blue}{\star}&\textcolor{blue}{12}&\textcolor{blue}{13}&\textcolor{blue}{14}&\textcolor{blue}{\star}&\textcolor{blue}{\star}&\textcolor{blue}{\star}&\textcolor{blue}{\star}\\
		\star&\star&\star&\star&\star&\star&3&4&0&\star&\star&\star\\
		\textcolor{red}{\star}&	\textcolor{red}{\star}&	\textcolor{red}{\star}&	\textcolor{red}{\star}&	\textcolor{red}{\star}&	\textcolor{red}{\star}&	\textcolor{red}{9}&	\textcolor{red}{10}&	\textcolor{red}{6}&	\textcolor{red}{\star}&	\textcolor{red}{\star}&	\textcolor{red}{\star}\\
		\textcolor{blue}{\star}&\textcolor{blue}{\star}&\textcolor{blue}{\star}&\textcolor{blue}{\star}&\textcolor{blue}{\star}&\textcolor{blue}{\star}&\textcolor{blue}{15}&\textcolor{blue}{16}&\textcolor{blue}{12}&\textcolor{blue}{\star}&\textcolor{blue}{\star}&\textcolor{blue}{\star}\\	
		\star&\star&\star&\star&	\star & \star & \star& 5&1&3&\star &\star\\ 
		\textcolor{red}{\star}&	\textcolor{red}{\star}&	\textcolor{red}{\star}&	\textcolor{red}{\star}&	\textcolor{red}{	\star} &	\textcolor{red}{ \star} & 	\textcolor{red}{\star}&	\textcolor{red}{ 11}&	\textcolor{red}{7}&	\textcolor{red}{9}&	\textcolor{red}{\star} &	\textcolor{red}{\star}\\ 
		\textcolor{blue}{\star}&\textcolor{blue}{\star}&\textcolor{blue}{\star}&\textcolor{blue}{\star}&\textcolor{blue}{	\star} & \textcolor{blue}{\star} & \textcolor{blue}{\star}&\textcolor{blue}{ 17}&\textcolor{blue}{13}&\textcolor{blue}{15}&\textcolor{blue}{\star} &\textcolor{blue}{\star}\\ 
		\star&\star&\star&\star&	\star &\star & \star &\star &2&4 &5 &\star \\ 
		\textcolor{red}{\star}&	\textcolor{red}{\star}&	\textcolor{red}{\star}&	\textcolor{red}{\star}&	\textcolor{red}{	\star} &	\textcolor{red}{\star} &	\textcolor{red}{ \star}&	\textcolor{red}{\star} &	\textcolor{red}{8}&	\textcolor{red}{10} &	\textcolor{red}{11 }&	\textcolor{red}{\star} \\
		\textcolor{blue}{\star}&\textcolor{blue}{\star}&\textcolor{blue}{\star}&\textcolor{blue}{\star}&	\textcolor{blue}{\star} &\textcolor{blue}{\star} & \textcolor{blue}{\star} &\textcolor{blue}{\star} &\textcolor{blue}{14}&\textcolor{blue}{16} &\textcolor{blue}{17} &\textcolor{blue}{\star} \\\hline
		\star&\star&\star&\star&	\star&\star&\star&\star&\star&0&1&2\\
		\textcolor{red}{	\star}&	\textcolor{red}{\star}&	\textcolor{red}{\star}&	\textcolor{red}{\star}&		\textcolor{red}{\star}&	\textcolor{red}{\star}&	\textcolor{red}{\star}&	\textcolor{red}{\star}&	\textcolor{red}{\star}&	\textcolor{red}{6}&	\textcolor{red}{7}&	\textcolor{red}{8}\\
		\textcolor{blue}{\star}&\textcolor{blue}{\star}&\textcolor{blue}{\star}&\textcolor{blue}{\star}&	\textcolor{blue}{\star}&\textcolor{blue}{\star}&\textcolor{blue}{\star}&\textcolor{blue}{\star}&\textcolor{blue}{\star}&\textcolor{blue}{12}&\textcolor{blue}{13}&\textcolor{blue}{14}\\
		0&\star&\star&\star&\star&\star&\star&\star&\star&\star&3&4\\
		\textcolor{red}{6}&	\textcolor{red}{\star}&	\textcolor{red}{\star}&	\textcolor{red}{\star}&	\textcolor{red}{\star}&	\textcolor{red}{\star}&	\textcolor{red}{\star}&	\textcolor{red}{\star}&	\textcolor{red}{\star}&	\textcolor{red}{\star}&	\textcolor{red}{9}&	\textcolor{red}{10}\\
		\textcolor{blue}{12}&\textcolor{blue}{\star}&\textcolor{blue}{\star}&\textcolor{blue}{\star}&\textcolor{blue}{\star}&\textcolor{blue}{\star}&\textcolor{blue}{\star}&\textcolor{blue}{\star}&\textcolor{blue}{\star}&\textcolor{blue}{\star}&\textcolor{blue}{15}&\textcolor{blue}{16}\\
		1&3&\star &\star&	\star&\star&\star&\star&	\star & \star & \star& 5\\ 
		\textcolor{red}{7}&	\textcolor{red}{9}&	\textcolor{red}{\star} &	\textcolor{red}{\star}&		\textcolor{red}{\star}&	\textcolor{red}{\star}&	\textcolor{red}{\star}&	\textcolor{red}{\star}&		\textcolor{red}{\star} & 	\textcolor{red}{\star} & 	\textcolor{red}{\star}&	\textcolor{red}{ 11}\\ 
		\textcolor{blue}{13}&\textcolor{blue}{15}&\textcolor{blue}{\star} &\textcolor{blue}{\star}&\textcolor{blue}{	\star}&\textcolor{blue}{\star}&\textcolor{blue}{\star}&\textcolor{blue}{\star}&\textcolor{blue}{	\star} &\textcolor{blue}{ \star} & \textcolor{blue}{\star}&\textcolor{blue}{ 17}\\ 
		2&4 &5 &\star &\star&\star&\star&\star&	\star &\star & \star &\star \\
		\textcolor{red}{8}&	\textcolor{red}{10} &	\textcolor{red}{11} &	\textcolor{red}{\star} &	\textcolor{red}{\star}&	\textcolor{red}{\star}&	\textcolor{red}{\star}&	\textcolor{red}{\star}&	\textcolor{red}{	\star} &	\textcolor{red}{\star }& 	\textcolor{red}{\star} &	\textcolor{red}{\star} \\
		\textcolor{blue}{14}& \textcolor{blue}{16}&\textcolor{blue}{17} &\textcolor{blue}{\star} &\textcolor{blue}{\star}&\textcolor{blue}{\star}&\textcolor{blue}{\star}&\textcolor{blue}{\star}&\textcolor{blue}{	\star} &\textcolor{blue}{\star} & \textcolor{blue}{\star} &\textcolor{blue}{\star} 
		\end{array}\right). 
		\end{align}}

	Now, using \textbf{procedure 4}, the matrix $\tilde{{\bf P}}_2$ obtained is as follows: $\tilde{{\bf P}}_2 =\tilde{{\bf P}}_1+18=$
	{\small
		\begin{align}  \left( \begin{array}{cccc|cccc|cccc}
		\star &18 & 19& 20&\star& \star &\star&\star&\star&\star&\star&\star \\
		\textcolor{red}{\star} &	\textcolor{red}{24} & 	\textcolor{red}{25}& 	\textcolor{red}{26}&	\textcolor{red}{\star}&	\textcolor{red}{ \star} &	\textcolor{red}{\star}&	\textcolor{red}{\star}&	\textcolor{red}{\star}&	\textcolor{red}{\star}&	\textcolor{red}{\star}&	\textcolor{red}{\star} \\
		\textcolor{blue}{\star} &\textcolor{blue}{30} & \textcolor{blue}{31}& \textcolor{blue}{32}&\textcolor{blue}{\star}&\textcolor{blue}{ \star} &\textcolor{blue}{\star}&\textcolor{blue}{\star}&\textcolor{blue}{\star}&\textcolor{blue}{\star}&\textcolor{blue}{\star}&\textcolor{blue}{\star} \\
		\star & \star & 21 &22&18 &\star &\star &\star &\star&\star&\star&\star\\
			\textcolor{red}{\star} &	\textcolor{red}{ \star} &	\textcolor{red}{ 27} &	\textcolor{red}{28}&	\textcolor{red}{24} &	\textcolor{red}{\star} &	\textcolor{red}{\star} &	\textcolor{red}{\star} &	\textcolor{red}{\star}&	\textcolor{red}{\star}&	\textcolor{red}{\star}&	\textcolor{red}{\star}\\
		\textcolor{blue}{\star} & \textcolor{blue}{\star} & \textcolor{blue}{33} &\textcolor{blue}{34}&\textcolor{blue}{30 }&\textcolor{blue}{\star} &\textcolor{blue}{\star }&\textcolor{blue}{\star} &\textcolor{blue}{\star}&\textcolor{blue}{\star}&\textcolor{blue}{\star}&\textcolor{blue}{\star}\\
		\star & \star & \star& 23&19&21&\star &\star&\star&\star&\star&\star\\ 
		\textcolor{red}{	\star} &	\textcolor{red}{ \star} &	\textcolor{red}{ \star}& 	\textcolor{red}{29}&	\textcolor{red}{25}&	\textcolor{red}{27}&	\textcolor{red}{\star} &	\textcolor{red}{\star}&	\textcolor{red}{\star}&	\textcolor{red}{\star}&	\textcolor{red}{\star}&	\textcolor{red}{\star}\\ 
		\textcolor{blue}{\star} & \textcolor{blue}{\star} & \textcolor{blue}{\star}& \textcolor{blue}{35}&\textcolor{blue}{31}&\textcolor{blue}{33}&\textcolor{blue}{\star} &\textcolor{blue}{\star}&\textcolor{blue}{\star}&\textcolor{blue}{\star}&\textcolor{blue}{\star}&\textcolor{blue}{\star}\\ 
		\star &\star & \star &\star &20&22 &23 &\star &\star&\star&\star&\star\\ 
			\textcolor{red}{\star} &	\textcolor{red}{\star} &	\textcolor{red}{ \star} &	\textcolor{red}{\star} &	\textcolor{red}{26}&	\textcolor{red}{28} &	\textcolor{red}{29 }&	\textcolor{red}{\star} &	\textcolor{red}{\star}&	\textcolor{red}{\star}&	\textcolor{red}{\star}&	\textcolor{red}{\star}\\
		\textcolor{blue}{\star} &\textcolor{blue}{\star} & \textcolor{blue}{\star} &\textcolor{blue}{\star} &\textcolor{blue}{32}&\textcolor{blue}{34} &\textcolor{blue}{35} &\textcolor{blue}{\star} &\textcolor{blue}{\star}&\textcolor{blue}{\star}&\textcolor{blue}{\star}&\textcolor{blue}{\star}\\\hline
		\star&\star&\star&\star&\star&18&19&20&\star&\star&\star&\star\\
			\textcolor{red}{\star}&	\textcolor{red}{\star}&	\textcolor{red}{\star}&	\textcolor{red}{\star}&	\textcolor{red}{\star}&	\textcolor{red}{24}&	\textcolor{red}{25}&	\textcolor{red}{26}&	\textcolor{red}{\star}&	\textcolor{red}{\star}&	\textcolor{red}{\star}&	\textcolor{red}{\star}\\
		\textcolor{blue}{\star}&\textcolor{blue}{\star}&\textcolor{blue}{\star}&\textcolor{blue}{\star}&\textcolor{blue}{\star}&\textcolor{blue}{30}&\textcolor{blue}{31}&\textcolor{blue}{32}&\textcolor{blue}{\star}&\textcolor{blue}{\star}&\textcolor{blue}{\star}&\textcolor{blue}{\star}\\
		\star&\star&\star&\star&\star&\star&21&22&18&\star&\star&\star\\
			\textcolor{red}{\star}&	\textcolor{red}{\star}&	\textcolor{red}{\star}&	\textcolor{red}{\star}&	\textcolor{red}{\star}&	\textcolor{red}{\star}&	\textcolor{red}{27}&	\textcolor{red}{28}&	\textcolor{red}{24}&	\textcolor{red}{\star}&	\textcolor{red}{\star}&	\textcolor{red}{\star}\\
		\textcolor{blue}{\star}&\textcolor{blue}{\star}&\textcolor{blue}{\star}&\textcolor{blue}{\star}&\textcolor{blue}{\star}&\textcolor{blue}{\star}&\textcolor{blue}{33}&\textcolor{blue}{34}&\textcolor{blue}{30}&\textcolor{blue}{\star}&\textcolor{blue}{\star}&\textcolor{blue}{\star}\\	
		\star&\star&\star&\star&	\star & \star & \star& 23&19&21&\star &\star\\ 
			\textcolor{red}{\star}&	\textcolor{red}{\star}&	\textcolor{red}{\star}&	\textcolor{red}{\star}&	\textcolor{red}{	\star} &	\textcolor{red}{ \star} & 	\textcolor{red}{\star}&	\textcolor{red}{ 29}&	\textcolor{red}{25}&	\textcolor{red}{27}&	\textcolor{red}{\star} &	\textcolor{red}{\star}\\ 
		\textcolor{blue}{\star}&\textcolor{blue}{\star}&\textcolor{blue}{\star}&\textcolor{blue}{\star}&\textcolor{blue}{	\star} & \textcolor{blue}{\star} & \textcolor{blue}{\star}&\textcolor{blue}{ 35}&\textcolor{blue}{31}&\textcolor{blue}{33}&\textcolor{blue}{\star} &\textcolor{blue}{\star}\\ 
		\star&\star&\star&\star&	\star &\star & \star &\star &20&22 &23 &\star \\ 
			\textcolor{red}{\star}&	\textcolor{red}{\star}&	\textcolor{red}{\star}&	\textcolor{red}{\star}&	\textcolor{red}{	\star} &	\textcolor{red}{\star} &	\textcolor{red}{ \star}&	\textcolor{red}{\star} &	\textcolor{red}{26}&	\textcolor{red}{28} &	\textcolor{red}{29 }&	\textcolor{red}{\star} \\
		\textcolor{blue}{\star}&\textcolor{blue}{\star}&\textcolor{blue}{\star}&\textcolor{blue}{\star}&	\textcolor{blue}{\star} &\textcolor{blue}{\star} & \textcolor{blue}{\star} &\textcolor{blue}{\star} &\textcolor{blue}{32}&\textcolor{blue}{34} &\textcolor{blue}{35} &\textcolor{blue}{\star} \\\hline
		\star&\star&\star&\star&	\star&\star&\star&\star&\star&18&19&20\\
		\textcolor{red}{	\star}&	\textcolor{red}{\star}&	\textcolor{red}{\star}&	\textcolor{red}{\star}&		\textcolor{red}{\star}&	\textcolor{red}{\star}&	\textcolor{red}{\star}&	\textcolor{red}{\star}&	\textcolor{red}{\star}&	\textcolor{red}{24}&	\textcolor{red}{25}&	\textcolor{red}{26}\\
		\textcolor{blue}{\star}&\textcolor{blue}{\star}&\textcolor{blue}{\star}&\textcolor{blue}{\star}&	\textcolor{blue}{\star}&\textcolor{blue}{\star}&\textcolor{blue}{\star}&\textcolor{blue}{\star}&\textcolor{blue}{\star}&\textcolor{blue}{30}&\textcolor{blue}{31}&\textcolor{blue}{32}\\
		18&\star&\star&\star&\star&\star&\star&\star&\star&\star&21&22\\
			\textcolor{red}{24}&	\textcolor{red}{\star}&	\textcolor{red}{\star}&	\textcolor{red}{\star}&	\textcolor{red}{\star}&	\textcolor{red}{\star}&	\textcolor{red}{\star}&	\textcolor{red}{\star}&	\textcolor{red}{\star}&	\textcolor{red}{\star}&	\textcolor{red}{27}&	\textcolor{red}{28}\\
		\textcolor{blue}{30}&\textcolor{blue}{\star}&\textcolor{blue}{\star}&\textcolor{blue}{\star}&\textcolor{blue}{\star}&\textcolor{blue}{\star}&\textcolor{blue}{\star}&\textcolor{blue}{\star}&\textcolor{blue}{\star}&\textcolor{blue}{\star}&\textcolor{blue}{33}&\textcolor{blue}{34}\\
		19&21&\star &\star&	\star&\star&\star&\star&	\star & \star & \star& 23\\ 
			\textcolor{red}{25}&	\textcolor{red}{27}&	\textcolor{red}{\star} &	\textcolor{red}{\star}&		\textcolor{red}{\star}&	\textcolor{red}{\star}&	\textcolor{red}{\star}&	\textcolor{red}{\star}&		\textcolor{red}{\star} & 	\textcolor{red}{\star} & 	\textcolor{red}{\star}&	\textcolor{red}{ 29}\\ 
		\textcolor{blue}{31}&\textcolor{blue}{33}&\textcolor{blue}{\star} &\textcolor{blue}{\star}&\textcolor{blue}{	\star}&\textcolor{blue}{\star}&\textcolor{blue}{\star}&\textcolor{blue}{\star}&\textcolor{blue}{	\star} &\textcolor{blue}{ \star} & \textcolor{blue}{\star}&\textcolor{blue}{ 35}\\ 
		20&22 &23 &\star &\star&\star&\star&\star&	\star &\star & \star &\star \\
			\textcolor{red}{26}&	\textcolor{red}{28} &	\textcolor{red}{29} &	\textcolor{red}{\star} &	\textcolor{red}{\star}&	\textcolor{red}{\star}&	\textcolor{red}{\star}&	\textcolor{red}{\star}&	\textcolor{red}{	\star} &	\textcolor{red}{\star }& 	\textcolor{red}{\star} &	\textcolor{red}{\star} \\
		\textcolor{blue}{32}& \textcolor{blue}{34}&\textcolor{blue}{35} &\textcolor{blue}{\star} &\textcolor{blue}{\star}&\textcolor{blue}{\star}&\textcolor{blue}{\star}&\textcolor{blue}{\star}&\textcolor{blue}{	\star} &\textcolor{blue}{\star} & \textcolor{blue}{\star} &\textcolor{blue}{\star} 
		\end{array}\right). 
		\end{align}}
	 Similarly the matrix $\tilde{{\bf P}}_3$ obtained is as follows: $\tilde{{\bf P}}_3 =\tilde{{\bf P}}_1+36=$
	{\small
	\begin{align}  \left( \begin{array}{cccc|cccc|cccc}
	\star &36 & 37& 38&\star& \star &\star&\star&\star&\star&\star&\star \\
	\textcolor{red}{\star} &	\textcolor{red}{42} & 	\textcolor{red}{43}& 	\textcolor{red}{44}&	\textcolor{red}{\star}&	\textcolor{red}{ \star} &	\textcolor{red}{\star}&	\textcolor{red}{\star}&	\textcolor{red}{\star}&	\textcolor{red}{\star}&	\textcolor{red}{\star}&	\textcolor{red}{\star} \\
	\textcolor{blue}{\star} &\textcolor{blue}{48} & \textcolor{blue}{49}& \textcolor{blue}{50}&\textcolor{blue}{\star}&\textcolor{blue}{ \star} &\textcolor{blue}{\star}&\textcolor{blue}{\star}&\textcolor{blue}{\star}&\textcolor{blue}{\star}&\textcolor{blue}{\star}&\textcolor{blue}{\star} \\
	\star & \star & 39 &40&36 &\star &\star &\star &\star&\star&\star&\star\\
	\textcolor{red}{\star} &	\textcolor{red}{ \star} &	\textcolor{red}{ 45} &	\textcolor{red}{46}&	\textcolor{red}{42} &	\textcolor{red}{\star} &	\textcolor{red}{\star} &	\textcolor{red}{\star} &	\textcolor{red}{\star}&	\textcolor{red}{\star}&	\textcolor{red}{\star}&	\textcolor{red}{\star}\\
	\textcolor{blue}{\star} & \textcolor{blue}{\star} & \textcolor{blue}{51} &\textcolor{blue}{52}&\textcolor{blue}{48 }&\textcolor{blue}{\star} &\textcolor{blue}{\star }&\textcolor{blue}{\star} &\textcolor{blue}{\star}&\textcolor{blue}{\star}&\textcolor{blue}{\star}&\textcolor{blue}{\star}\\
	\star & \star & \star& 41&37&39&\star &\star&\star&\star&\star&\star\\ 
	\textcolor{red}{	\star} &	\textcolor{red}{ \star} &	\textcolor{red}{ \star}& 	\textcolor{red}{47}&	\textcolor{red}{43}&	\textcolor{red}{45}&	\textcolor{red}{\star} &	\textcolor{red}{\star}&	\textcolor{red}{\star}&	\textcolor{red}{\star}&	\textcolor{red}{\star}&	\textcolor{red}{\star}\\ 
	\textcolor{blue}{\star} & \textcolor{blue}{\star} & \textcolor{blue}{\star}& \textcolor{blue}{53}&\textcolor{blue}{49}&\textcolor{blue}{51}&\textcolor{blue}{\star} &\textcolor{blue}{\star}&\textcolor{blue}{\star}&\textcolor{blue}{\star}&\textcolor{blue}{\star}&\textcolor{blue}{\star}\\ 
	\star &\star & \star &\star &38&40 &41 &\star &\star&\star&\star&\star\\ 
	\textcolor{red}{\star} &	\textcolor{red}{\star} &	\textcolor{red}{ \star} &	\textcolor{red}{\star} &	\textcolor{red}{44}&	\textcolor{red}{46} &	\textcolor{red}{47 }&	\textcolor{red}{\star} &	\textcolor{red}{\star}&	\textcolor{red}{\star}&	\textcolor{red}{\star}&	\textcolor{red}{\star}\\
	\textcolor{blue}{\star} &\textcolor{blue}{\star} & \textcolor{blue}{\star} &\textcolor{blue}{\star} &\textcolor{blue}{50}&\textcolor{blue}{52} &\textcolor{blue}{53} &\textcolor{blue}{\star} &\textcolor{blue}{\star}&\textcolor{blue}{\star}&\textcolor{blue}{\star}&\textcolor{blue}{\star}\\\hline
	\star&\star&\star&\star&\star&36&37&38&\star&\star&\star&\star\\
	\textcolor{red}{\star}&	\textcolor{red}{\star}&	\textcolor{red}{\star}&	\textcolor{red}{\star}&	\textcolor{red}{\star}&	\textcolor{red}{42}&	\textcolor{red}{43}&	\textcolor{red}{44}&	\textcolor{red}{\star}&	\textcolor{red}{\star}&	\textcolor{red}{\star}&	\textcolor{red}{\star}\\
	\textcolor{blue}{\star}&\textcolor{blue}{\star}&\textcolor{blue}{\star}&\textcolor{blue}{\star}&\textcolor{blue}{\star}&\textcolor{blue}{48}&\textcolor{blue}{49}&\textcolor{blue}{50}&\textcolor{blue}{\star}&\textcolor{blue}{\star}&\textcolor{blue}{\star}&\textcolor{blue}{\star}\\
	\star&\star&\star&\star&\star&\star&39&40&36&\star&\star&\star\\
	\textcolor{red}{\star}&	\textcolor{red}{\star}&	\textcolor{red}{\star}&	\textcolor{red}{\star}&	\textcolor{red}{\star}&	\textcolor{red}{\star}&	\textcolor{red}{45}&	\textcolor{red}{46}&	\textcolor{red}{42}&	\textcolor{red}{\star}&	\textcolor{red}{\star}&	\textcolor{red}{\star}\\
	\textcolor{blue}{\star}&\textcolor{blue}{\star}&\textcolor{blue}{\star}&\textcolor{blue}{\star}&\textcolor{blue}{\star}&\textcolor{blue}{\star}&\textcolor{blue}{51}&\textcolor{blue}{52}&\textcolor{blue}{48}&\textcolor{blue}{\star}&\textcolor{blue}{\star}&\textcolor{blue}{\star}\\	
	\star&\star&\star&\star&	\star & \star & \star& 41&37&39&\star &\star\\ 
	\textcolor{red}{\star}&	\textcolor{red}{\star}&	\textcolor{red}{\star}&	\textcolor{red}{\star}&	\textcolor{red}{	\star} &	\textcolor{red}{ \star} & 	\textcolor{red}{\star}&	\textcolor{red}{ 47}&	\textcolor{red}{43}&	\textcolor{red}{45}&	\textcolor{red}{\star} &	\textcolor{red}{\star}\\ 
	\textcolor{blue}{\star}&\textcolor{blue}{\star}&\textcolor{blue}{\star}&\textcolor{blue}{\star}&\textcolor{blue}{	\star} & \textcolor{blue}{\star} & \textcolor{blue}{\star}&\textcolor{blue}{ 53}&\textcolor{blue}{49}&\textcolor{blue}{51}&\textcolor{blue}{\star} &\textcolor{blue}{\star}\\ 
	\star&\star&\star&\star&	\star &\star & \star &\star &38&40 &41 &\star \\ 
	\textcolor{red}{\star}&	\textcolor{red}{\star}&	\textcolor{red}{\star}&	\textcolor{red}{\star}&	\textcolor{red}{	\star} &	\textcolor{red}{\star} &	\textcolor{red}{ \star}&	\textcolor{red}{\star} &	\textcolor{red}{44}&	\textcolor{red}{46} &	\textcolor{red}{47 }&	\textcolor{red}{\star} \\
	\textcolor{blue}{\star}&\textcolor{blue}{\star}&\textcolor{blue}{\star}&\textcolor{blue}{\star}&	\textcolor{blue}{\star} &\textcolor{blue}{\star} & \textcolor{blue}{\star} &\textcolor{blue}{\star} &\textcolor{blue}{50}&\textcolor{blue}{52} &\textcolor{blue}{53} &\textcolor{blue}{\star} \\\hline
	\star&\star&\star&\star&	\star&\star&\star&\star&\star&36&37&38\\
	\textcolor{red}{	\star}&	\textcolor{red}{\star}&	\textcolor{red}{\star}&	\textcolor{red}{\star}&		\textcolor{red}{\star}&	\textcolor{red}{\star}&	\textcolor{red}{\star}&	\textcolor{red}{\star}&	\textcolor{red}{\star}&	\textcolor{red}{42}&	\textcolor{red}{43}&	\textcolor{red}{44}\\
	\textcolor{blue}{\star}&\textcolor{blue}{\star}&\textcolor{blue}{\star}&\textcolor{blue}{\star}&	\textcolor{blue}{\star}&\textcolor{blue}{\star}&\textcolor{blue}{\star}&\textcolor{blue}{\star}&\textcolor{blue}{\star}&\textcolor{blue}{48}&\textcolor{blue}{49}&\textcolor{blue}{50}\\
	36&\star&\star&\star&\star&\star&\star&\star&\star&\star&39&40\\
	\textcolor{red}{42}&	\textcolor{red}{\star}&	\textcolor{red}{\star}&	\textcolor{red}{\star}&	\textcolor{red}{\star}&	\textcolor{red}{\star}&	\textcolor{red}{\star}&	\textcolor{red}{\star}&	\textcolor{red}{\star}&	\textcolor{red}{\star}&	\textcolor{red}{45}&	\textcolor{red}{46}\\
	\textcolor{blue}{48}&\textcolor{blue}{\star}&\textcolor{blue}{\star}&\textcolor{blue}{\star}&\textcolor{blue}{\star}&\textcolor{blue}{\star}&\textcolor{blue}{\star}&\textcolor{blue}{\star}&\textcolor{blue}{\star}&\textcolor{blue}{\star}&\textcolor{blue}{51}&\textcolor{blue}{52}\\
	37&39&\star &\star&	\star&\star&\star&\star&	\star & \star & \star& 41\\ 
	\textcolor{red}{43}&	\textcolor{red}{45}&	\textcolor{red}{\star} &	\textcolor{red}{\star}&		\textcolor{red}{\star}&	\textcolor{red}{\star}&	\textcolor{red}{\star}&	\textcolor{red}{\star}&		\textcolor{red}{\star} & 	\textcolor{red}{\star} & 	\textcolor{red}{\star}&	\textcolor{red}{ 47}\\ 
	\textcolor{blue}{49}&\textcolor{blue}{51}&\textcolor{blue}{\star} &\textcolor{blue}{\star}&\textcolor{blue}{	\star}&\textcolor{blue}{\star}&\textcolor{blue}{\star}&\textcolor{blue}{\star}&\textcolor{blue}{	\star} &\textcolor{blue}{ \star} & \textcolor{blue}{\star}&\textcolor{blue}{ 53}\\ 
	38&40 &41 &\star &\star&\star&\star&\star&	\star &\star & \star &\star \\
	\textcolor{red}{44}&	\textcolor{red}{46} &	\textcolor{red}{47} &	\textcolor{red}{\star} &	\textcolor{red}{\star}&	\textcolor{red}{\star}&	\textcolor{red}{\star}&	\textcolor{red}{\star}&	\textcolor{red}{	\star} &	\textcolor{red}{\star }& 	\textcolor{red}{\star} &	\textcolor{red}{\star} \\
	\textcolor{blue}{50}& \textcolor{blue}{52}&\textcolor{blue}{53} &\textcolor{blue}{\star} &\textcolor{blue}{\star}&\textcolor{blue}{\star}&\textcolor{blue}{\star}&\textcolor{blue}{\star}&\textcolor{blue}{	\star} &\textcolor{blue}{\star} & \textcolor{blue}{\star} &\textcolor{blue}{\star} 
	\end{array}\right). 
	\end{align}}
	Finally, the matrix $\textbf{P}=\left (\begin{array}{ccc}
	\tilde{{\bf P}}_1&\tilde{{\bf P}}_2&\tilde{{\bf P}}_3
	\end{array} \right )$ is obtained by concatenating $\tilde{{\bf P}}_1,\tilde{{\bf P}}_2$ and $\tilde{{\bf P}}_3$.
	
	There are $36$ rows and columns in the matrix ${\bf P}$, indexed by  $0,1,2,\ldots,35$.  The rows correspond to the sub-files of each file and columns correspond to the users.
	There are a total of $54$ integers in $\textbf{P}$, and each integer occurs $6$ times in $\textbf{P}$. All the conditions $\textit{C1},\textit{C2}'$ and \textit{C3} are satisfied by the matrix $\textbf{P}$.  Additionally, all the $27$ stars in each column occur in consecutive rows and the position of stars in each column can be obtained by cyclically shifting the previous column by $3$ units down. Therefore, the matrix $\textbf{P}$ represents a $3$-cyclic $6$-regular $(36,36,27,54)$ PDA. 
	
	Using the caching scheme given by Algorithm \ref{algo1} based on the PDA $\textbf{P}$ constructed in this example, we obtain $54$ coded symbols. 
The coding gain achieved using this scheme is $6$, since each transmission benefits $6$ users. The coding gain achieved using the NT and RK schemes for this example is $4$. The sub-packetization level required for NT, and RK  schemes are $7920$ and $660$  respectively. 
\end{example}
\section{Proof of Theorem \ref{thm valid pda}} 
\label{proof of correctness}
In this section, we prove that the  matrix {\bf P} constructed using Algorithm \ref{algo2} is a $k$-cyclic $\frac{2K}{K-kL+k}$-regular $(K,K,kL,\frac{(K-kL)(K-kL+k)}{2})$ PDA.

Using \textbf{procedure 1} of Algorithm \ref{algo2}, we construct a square matrix {\bf A}. It can be observed from step 5 that all the diagonal entries as well as all the entries below the diagonal are stars. For $i=0$, we consider the $0^{th} $ row. The entry corresponding to the $0^{th}$ column is a star. From step 6, the entries corresponding to $1^{st}$ column to $(\frac{K-kL}{k})^{th}$ column are integers from $0$ to $\frac{K-kL}{k}-1$ respectively. Now,  for $i=1$, we consider the $1^{st} $ row and the entry corresponding to the $0^{th}$ and $1^{st}$ columns are stars. The entries corresponding to $2^{nd}$ column to $(\frac{K-kL}{k})^{th}$ column are integers from $\frac{K-kL}{k}$ to $\frac{2(K-kL)}{k}-2$ respectively. Similarly for any row $i \in [0,\frac{K-kL}{k}]$,  the entries corresponding to the $0^{th}$ column to the $i^{th}$ column are stars. The positions corresponding to the rest of the columns ($(i+1)^{th}$ column to $(\frac{K-kL}{k})^{th}$ column ) are filled with $\frac{K-kL}{k} -i$ different integers which are also different from the  integers used for filling the previous rows.
There are $S_1=\frac{(K-kL)(K-kL+k)}{2k^2}$ integers present in the matrix $\textbf{A}$, since there are  $S_1$ positions available above the diagonal and all those positions are filled with different integers. The number of stars in the $i^{th} $ column, $i \in [0,\frac{K-kL}{k}]$ of the matrix ${\bf A}$, is $\frac{K-kL+k}{k}-i$.

	Using \textbf{procedure 2} of Algorithm \ref{algo2}, we obtain a block matrix ${\bf P}_1$ with $\frac{K}{K-kL+k}$ row as well as column blocks, indexed by $[0,\frac{K}{K-kL+k})$. Each block is of size 
	$\frac{K-kL+k}{k} \times \frac{K-kL+k}{k}$. The $0^{th}$ column block is given by \begin{align} \label{P first column}
	\left ( \begin{array}{ccccccc}
	{\bf A}\\{\bf X}\\\vdots\\ {\bf X}\\{\bf A}^{T}
	\end{array}\right )
	\end{align} Each column block, $i \in [1,\frac{K}{K-kL+k})$ is obtained by cyclically shifting the blocks in the $0^{th}$ column block down by $i $ units. Each block in ${\bf P}_1$ is of size $\frac{K-kL+k}{k} \times \frac{K-kL+k}{k}$ and there are $\frac{K}{K-kL+k}$ rows and column blocks present in ${\bf P}_1$. Hence the matrix ${\bf P}_1$ is of size $\frac{K}{k} \times \frac{K}{k}.$
	
		Using \textbf{procedure 3} of Algorithm \ref{algo2}, a tall matrix $\tilde{{\bf P}}_1$ is generated from the square matrix ${\bf P}_1$. The $K \times \frac{K}{k}$ matrix $\tilde{{\bf P}}_1$ is generated from the $\frac{K}{k} \times \frac{K}{k}$ matrix ${\bf P}_1$ by adding new $k-1$ rows after each row in ${\bf P}_1$. 
	For each $i \in [0,\frac{K}{k})$, the $(ik)^{th}$ row of the matrix $\tilde{{\bf P}}_1$ is same as the $i^{th}$ row of the matrix ${\bf P}_1$ (as in step $21$).
	% and$((i+1)k)^{th}$ row of the matrix $\tilde{{\bf P}}_1$ is same as the $(i+1)^{th}$ row of the matrix ${\bf P}_1$ (except for $i=k-1$, since for $i=k-1, ((i+1)k)^{th} $ row does not exist).
	It is evident from step $23$ that the positions of stars in the next $k-1$ rows, i.e.,  $(ik+1)^{th}$ row to  $((i+1)k-1)^{th}$ row, are same as that of the $(ik)^{th}$ row of the matrix $\tilde{{\bf P}}_1$. 
%	For the $(ik+1)^{th} $ row in $\tilde{{\bf P}}_1$, the entries corresponding to columns of rest of the positions are obtained by adding $S_1$ to the corresponding elements in the $(ik)^{th} $ row in $\tilde{{\bf P}}_1$ as in step $23$. This is to make sure that the integers used for filling the $(ik+1)^{th} $ row in $\tilde{{\bf P}}_1$ is different from the integers used for filling the $(ik)^{th} $ row.
	For the $(ik+j)^{th}$ row, $j \in [1,k)$ of the matrix $\tilde{{\bf P}}_1$ the entries corresponding to columns which do not have stars are filled with integer obtained by adding $S_1$ to the corresponding elements in the previous row of the matrix $\tilde{{\bf P}}_1$.  This is to make sure that the integers used for filling the $(ik+j)^{th}$ row is different from the integers used for filling the $(ik)^{th}$ row to $(ik+j-1)^{th}$ row. 

	Using the \textbf{procedure 4} of Algorithm \ref{algo2}, we define new matrices $\tilde{{\bf P}}_t, t \in [2,k]$, where $\tilde{{\bf P}}_t =(\tilde{p}_{i,j} + (t-1)\tilde{S}_1)$,  $\tilde{S}_1$ denotes the number of integers present in the matrix $\tilde{{\bf P}}_1$, and $ \star+(t-1)\tilde{S}_1  =\star$. Each integer entries in $ \tilde{{\bf P}}_t$ is obtained by adding $(t-1)\tilde{S}_1$ to the corresponding entry in $\tilde{{\bf P}}_1$. The matrix ${\bf P}$ is obtained by concatenating the $k$ matrices $\tilde{{\bf P}}_t, t \in [1,k]$, each of size $K \times \frac{K}{k}$. Hence the matrix ${\bf P}$ is of size $K \times K.$
	
	\begin{lem}
		The $\frac{K}{k} \times \frac{K}{k}$ matrix ${\bf P}_1$ obtained from \textbf{procedure 2} of Algorithm \ref{algo2} is a $1$-cyclic $\frac{2K}{K-kL+k}$-regular $(\frac{K}{k},\frac{K}{k},L,\frac{(K-kL)(K-kL+k)}{2k^2} )$ PDA.
	\end{lem}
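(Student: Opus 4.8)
The plan is to check in turn the four features defining a $1$-cyclic $\frac{2K}{K-kL+k}$-regular PDA for $\mathbf{P}_1$: the per-column star count (C1), the integer multiplicities (C2$'$, which also pins down $S$), the cyclic star pattern, and the separation condition C3. Throughout I would abbreviate $b := \frac{K-kL+k}{k}$ (the block size) and $B := \frac{K}{K-kL+k}$ (the number of block rows and columns), so that $Bb = \frac{K}{k}$ and $(B-1)b = L-1$. The first ingredient is the explicit form of $\mathbf{A}$ from \textbf{procedure 1}: $a_{r,c}=\star$ exactly when $c \le r$, the $\binom{b}{2}=\frac{(K-kL)(K-kL+k)}{2k^2}$ positions strictly above the diagonal carry pairwise distinct integers, column $c$ of $\mathbf{A}$ holds $b-c$ stars, and column $c$ of $\mathbf{A}^T$ holds $c+1$ stars.

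Next I would establish a single global identity for the star set. Reading \textbf{procedure 2}, the blocks $\mathbf{A}$ sit on the block diagonal and the blocks $\mathbf{A}^T$ on the cyclic block superdiagonal, with all remaining blocks equal to the all-star block $\mathbf{X}$; equivalently, in column block $p$ the block $\mathbf{A}$ occupies block row $p$ and $\mathbf{A}^T$ occupies block row $(p-1)\bmod B$. Adding the stars contributed by $\mathbf{A}$, by $\mathbf{A}^T$, and by the intervening $\mathbf{X}$ blocks within one column combines to exactly $L$ and, more precisely, shows that
\[
 p_{i,j}=\star \iff (i-j)\bmod \tfrac{K}{k}\in[0,L).
\]
This one identity simultaneously gives C1 ($Z=L$), the fact that the $L$ stars in each column are cyclically consecutive, and the $1$-cyclic property, since increasing $j$ by one shifts the star window down by one unit modulo $\frac{K}{k}$. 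For the regularity I would observe that each integer $s$ occupies a unique above-diagonal cell $(r,c)$ of $\mathbf{A}$ and therefore appears once in the $\mathbf{A}$-block and once in the $\mathbf{A}^T$-block of each of the $B$ column blocks, i.e. with multiplicity $2B=\frac{2K}{K-kL+k}=g$; hence there are $S=\binom{b}{2}$ distinct integers, settling C2$'$ and C2.

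The main obstacle is C3, which I would attack by writing down all $g=2B$ occurrences of a fixed integer $s$ in global coordinates: the type-A copies $A_p=(pb+r,\,pb+c)$ and the type-$A^T$ copies $T_q=\big(((q-1)\bmod B)\,b+c,\;qb+r\big)$, for $p,q\in[0,B)$. For each of the pair types $(A_p,A_q)$, $(T_p,T_q)$ and $(A_p,T_q)$ I would first verify distinct rows and distinct columns --- which reduces, after comparing residues modulo $b$, to $p\ne q$ or to $r\ne c$ --- and then reduce the two cross-differences $i_1-j_2$ and $i_2-j_1$ modulo $\frac{K}{k}$. These collapse to expressions of the form $db+(r-c)$, $eb+(c-r)$, or $fb$, where the block offset lies in $\{0,\dots,B-1\}$; by the star identity above it then suffices to show each such value lies in $[0,L)$. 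For the $(A,A)$ and $(T,T)$ pairs the hypothesis $p\ne q$ is exactly what excludes the single offset ($d=0$, respectively $e=B-1$) that would otherwise land in $[L,\frac{K}{k})$, while for the $(A,T)$ pairs every offset already yields $fb\le (B-1)b=L-1<L$.

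The single inequality that makes all of these range estimates close is $b\le L-1$, equivalently $m:=\frac{K-kL}{k}\le L-2$. I expect this to be the crux, and I would prove it by noting that the block description in \textbf{procedure 2} is meaningful only when $B\ge 2$, i.e. $K\le 2k(L-1)$, and substituting this into the formula for $m$. This bound guarantees $c-r\le b-1\le L-2$, which is precisely what keeps the $e=0$ term of the $(T,T)$ case and the extreme offsets of the other cases inside the star window $[0,L)$. With $b\le L-1$ in hand the three pair-type checks become short modular inequalities, and combining C1, C2$'$, the cyclic structure, and C3 yields that $\mathbf{P}_1$ is the claimed $1$-cyclic $\frac{2K}{K-kL+k}$-regular $(\frac{K}{k},\frac{K}{k},L,\frac{(K-kL)(K-kL+k)}{2k^2})$ PDA.
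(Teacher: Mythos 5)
Your proposal is correct, and it reaches the lemma by a genuinely different route from the paper. Your key lemma is the closed-form star identity $p_{i,j}=\star \iff (i-j)\bmod \frac{K}{k}\in[0,L)$, from which C1, the cyclic consecutiveness of stars, and $1$-cyclicity all fall out at once, and which turns C3 into modular interval checks on the explicit coordinates $A_p=(pb+r,\,pb+c)$ and $T_q=\bigl(((q-1)\bmod B)b+c,\;qb+r\bigr)$ of the $2B$ occurrences of each integer; I verified the identity (the three contiguous ranges $[0,b-c)$, $[b-c,(B-1)b-c)$, $[(B-1)b-c,(B-1)b]$ from the $\mathbf{A}$, $\mathbf{X}$ and $\mathbf{A}^T$ blocks union to $[0,L-1]$) and your excluded offsets $d=0$ for $(A,A)$ pairs and $e=B-1$ for $(T,T)$ pairs are exactly right, with $(A,T)$ pairs unconditionally safe since $fb\le (B-1)b=L-1$. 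The paper argues block-structurally instead: C1 by per-column star counting $(b-i)+(B-2)b+(i+1)=L$, and C3 by splitting into occurrences in the same row block (the $2\times 2$ pattern inside $(\mathbf{A}\ \mathbf{A}^T)$), in non-adjacent row blocks (trivial, only $\mathbf{X}$ intervenes), and in adjacent row blocks, where it exhibits an explicit $4\times 4$ sub-array with $s$ on the diagonal and stars elsewhere, using rows $i,j,i+l,j+l$ and columns $j,i+l,j+l,i+2l$ with $l=\frac{K-kL+k}{k}$; the $1$-cyclic property is then read off from the staircase star picture of the zeroth column block. Your approach buys uniformity (one residue computation replaces the paper's three-way case analysis and its separate cyclicity argument) and it surfaces explicitly the inequality $b\le L-1$, equivalently $B\ge 2$, which the paper uses only implicitly (its star count involves $B-2$ copies of $\mathbf{X}$, and the block layout itself presupposes at least two block columns); the paper's approach buys a concrete picture of where colliding entries sit, which its subsequent lemmas for $\tilde{\mathbf{P}}_1$ and $\mathbf{P}$ reuse. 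Both proofs are complete for the regime the construction is defined in, so there is no gap in your argument.
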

	\begin{proof}
				The number of stars present  in the $i^{th} $ column, $i \in [0,\frac{K-kL}{k}]$ of the matrix ${\bf A}$, is $\frac{K-kL+k}{k}-i$ while the number of stars present  in the $i^{th} $ column of the matrix ${\bf A}^{T}$ (the transpose of the matrix ${\bf A}$) is $i+1$. Hence the number of stars present in the $i^{th} $ column, $i \in [0,\frac{K-kL}{k}]$ of the matrix ${\bf P}_1$, is $Z_1=(\frac{K-kL+k}{k}-i) + (\frac{K}{K-kL+k}-2)(\frac{K-kL+k}{k}) + (i+1) =L$. The number of stars in each column of the matrix ${\bf P}_1$ is also $L$, since each column block, $i \in [1,\frac{K}{K-kL+k})$ is obtained by cyclically shifting the blocks in the $0^{th}$ column block down by $i $ units, Hence the condition \textit{C1} in Definition \ref{def: PDA} is satisfied by the matrix ${\bf P}_1$. 
		
		The number of integers present in the matrix $\textbf{A}$ is $S_1 =\frac{(K-kL)(K-kL+k)}{2k^2}$. Each integer in the set $[0,S_1)$ occurs once in the matrix ${\bf A}$ as well as in ${\bf A}^{T}$. The block ${\bf A}$ as well as the  block ${\bf A}^{T}$ occur $\frac{K}{K-kL+k}$ times in the matrix ${\bf P}_1$ (once in each column block). The block ${\bf X}$ contains only stars. 
		Hence, 	the number of integers present in the matrix $\textbf{P}_1$ is $S_1$ and each integer in the set $[0,S_1)$ occurs $\frac{2K}{K-kL+k}$ times in the matrix ${\bf P}_1$. So, the condition $\textit{C2}'$ of Definition \ref{def regular PDA} is satisfied by the matrix ${\bf P}_1$ with $g=\frac{2K}{K-kL+k}$. 
		
	Now, we need to prove that the condition $\textit{C3}$ of Definition \ref{def: PDA} is satisfied by the matrix ${\bf P}_1$.
	Consider the sub-matrix $\tilde{{\bf A}} =(\tilde{a}_{i,j}), i \in [0,\frac{K-kL}{k}], j \in [0,\frac{2(K-kL)}{k}+1]$, present in the matrix ${\bf P}_1$, where \begin{equation}\label{submatrix AAt}
\tilde{{\bf A}} =	\left (\begin{array}{cccc} 
	{\bf A} & {\bf A}^{T}
	\end{array} \right )
	\end{equation} 
	  All the diagonal entries as well as the entries below the diagonal are stars in the matrix {\bf A}. So, for the transpose matrix ${\bf A}^{T}$, all the diagonal entries and the entries above the diagonal are stars. Hence, for any integer $s \in [0,S_1)$, suppose that the integer $s$ is present in the position corresponding to $i^{th}$ row and $j^{th}$ column of the matrix ${\bf A}$. Then, for $i,j \in [0,\frac{K-kL}{k}], j>i$,
	  \begin{equation}
	  \left( \begin{array}{cccc}
	 \tilde{a}_{i,j}&\tilde{a}_{i,i+\frac{K-kL+k}{k}}\\
	 \tilde{a}_{j,j} & \tilde{a}_{j,i+\frac{K-kL+k}{k}}
	  \end{array}\right) = \left( \begin{array}{cccc}
	  s&\star\\
	 \star & s
	  \end{array}\right) .
	  \end{equation}
	Hence for any integer $s$ present in the matrix ${\bf A}$,  the sub-matrix given by (\ref{submatrix AAt}) satisfies the condition \textit{C3} in Definition \ref{def: PDA}.
	
	For the same reason stated for the sub-matrix (\ref{submatrix AAt}), the sub-matrix $\left (\begin{array}{cccc} 
	{\bf A}^{T} \\ {\bf A}
	\end{array} \right )$ also satisfies the condition \textit{C3} in Definition \ref{def: PDA}.
	
	 Suppose we take any two non-adjacent row blocks in ${\bf P}_1$, say $i^{th}$ and $j^{th}$ row blocks, $i,j \in [0,\frac{K}{K-kL+k}),$ (the $0^{th}$ and the $(\frac{K}{K-kL+k} -1)^{th}$ row blocks are considered as adjacent row blocks). It can be observed that the sub-matrix containing only the blocks {\bf A} and ${\bf A}^{T}$ in the corresponding column blocks is of the form 
	 \begin{equation}\label{submatrix}
	 \left (\begin{array}{cccc} 
	 {\bf A} & {\bf A}^{T}&{\bf X}&{\bf X}\\
	 {\bf X}&{\bf X}&{\bf A} & {\bf A}^{T}
	 \end{array} \right ) \text{ or } 	 \left (\begin{array}{cccc} 
	 {\bf X}&{\bf X}&{\bf A} & {\bf A}^{T}\\
	  {\bf A} & {\bf A}^{T}&{\bf X}&{\bf X}
	 \end{array} \right )
	 \end{equation}

Since the matrix {\bf X} contains only stars, it is evident that the sub-matrices given by (\ref{submatrix}) satisfies the condition \textit{C3} in Definition \ref{def: PDA}.

	 Suppose we take any two adjacent row blocks in ${\bf P}_1$, say $i^{th}$ and $(i+1)^{th}$ row blocks, $i \in [0,\frac{K}{K-kL+k})$ (the $0^{th}$ and the $(\frac{K}{K-kL+k} -1)^{th}$ rows are considered as adjacent row blocks and if $i=\frac{K}{K-kL+k} -1$, then $i+1$ is considered as $0$). It can be observed that the sub-matrix containing only the blocks {\bf A} and ${\bf A}^{T}$ in the corresponding columns blocks is of the form 
	 \begin{equation}\label{submatrix adjacent}
	{\bf A}'=(a'_{i,j})= \left (\begin{array}{cccc} 
	 {\bf A} & {\bf A}^{T}&{\bf X}\\
	 {\bf X}&{\bf A} & {\bf A}^{T}
	 \end{array} \right )
	 \end{equation}
	 For any integer $s \in [0,S_1)$, suppose that the integer $s$ is present in the position corresponding to the $i^{th}$ row and the $j^{th}$ column of the matrix ${\bf A},i,j \in [0,\frac{K-kL}{k}], j>i$, then 
	 \begin{align*}
 & \left( \begin{array}{cccc}
a'_{i,j}&a'_{i,i+l} &a'_{i,j+l} &a'_{i,i+2l}\\
a'_{j,j} & a'_{j,i+l}&a'_{j,j+l}&a'_{j,i+2l}\\
a'_{i+l,j}&a'_{i+l,i+l}&a'_{i+l,j+l}&a'_{i+l,i+2l} \\
a'_{j+l,j}&a'_{j+l,i+l}&a'_{j+l,j+l} & a'_{j+l,i+2l}
\end{array}\right)\\
&=\left( \begin{array}{cccc}
s&\star &\star&\star\\
\star &s&\star&\star\\
\star&\star&s&\star\\
\star&\star&\star &s
\end{array}\right)
	 \end{align*}
	  where $l=\frac{K-kL+k}{k}$.
	 	Hence, for any integer $s$ present in the matrix ${\bf A}$,  the sub-matrix given by (\ref{submatrix adjacent}) satisfies the condition \textit{C3} in Definition \ref{def: PDA}.
	 	
	 	In short, if we take any integer $s$ in the matrix ${\bf P}_1$, the matrix ${\bf P}_1$ satisfies the condition \textit{C3} in Definition \ref{def: PDA}. All the three conditions $\textit{C1},\textit{C2}'$ and $\textit{C3}$ are satisfied by the matrix ${\bf P}_1$. Hence it is a $\frac{2K}{K-kL+k}$-regular $(\frac{K}{k},\frac{K}{k},L,\frac{(K-kL)(K-kL+k)}{2k^2} )$ PDA.
	 	
	 	Now, we need to prove that the matrix ${\bf P}_1$ is $k$-cyclic.
	 	Consider the $0^{th}$ column block in the block matrix ${\bf P}_1$:
	 	\begin{equation}
	 		 	{\bf C}_0=\left ( \begin{array}{ccccccc}
	 	{\bf A}\\ \hline{\bf X}\\\vdots\\ {\bf X}\\ \hline{\bf A}^{T}
	 	\end{array}\right ).
	 	\end{equation}
There are $\frac{K-kL+k}{k}$ columns and $\frac{K}{k}$ rows  in the matrix ${\bf C}_0$, indexed by $[0,\frac{K-kL}{k}]$ and $[0,\frac{K}{k})$ respectively. Recall that all the diagonal entries and the entries below the diagonal are stars in the matrix {\bf A}. For the transpose matrix ${\bf A}^{T}$, all the diagonal entries and the entries above the diagonal are stars. Hence the structure of stars in the matrix ${\bf C}_0$ is as follows:
	\begin{equation} \label{star structure}
\left ( \begin{array}{ccccccc}
\star& & & &\\
\star&\star & & &\\
\star& \star&\ddots& &\\
\star& \star&\ldots& \star&\\ \hline
\star& \star&\ldots& \star&\\
\star& \star&\ldots& \star&\\
\vdots& \vdots&\ldots& \vdots&\\
\star& \star&\ldots& \star&\\ \hline
\star& \star&\ldots& \star&\\
& \ddots&\star& \star&\\
&&\star& \star&\\
&&& \star&
\end{array}\right ).
\end{equation}
	 	We know that the number of the stars in each column in the matrix ${\bf P}_1$ is $L$. From (\ref{star structure}), it is clear that all the $L$ stars in each column occurs in consecutive rows. The $L$ stars in the $0^{th}$ column are present in the positions corresponding to the first $L$ rows. The structure of stars in rest of the  columns is obtained by cyclically shifting the stars in the previous column towards down by $1$ unit.
	 	We also know that in the block matrix ${\bf P}_1$, each column block is obtained by cyclically shifting the blocks in the previous column block towards down. Hence the matrix ${\bf P}_1$ represents a $1$-cyclic $\frac{2K}{K-kL+k}$-regular $(\frac{K}{k},\frac{K}{k},L,\frac{(K-kL)(K-kL+k)}{2k^2} )$ PDA.
 	\end{proof}
\begin{lem}
	The $K \times \frac{K}{k}$ matrix $\tilde{{\bf P}}_1$ obtained using \textbf{procedure 3} of Algorithm \ref{algo2} is a $k$-cyclic $\frac{2K}{K-kL+k}$-regular $(\frac{K}{k},K,kL,\frac{(K-kL)(K-kL+k)}{2k} )$ PDA.
\end{lem}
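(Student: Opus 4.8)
The plan is to treat $\tilde{{\bf P}}_1$ as a \emph{layered lift} of the PDA ${\bf P}_1$ from the previous lemma, so that each of the four defining properties descends from ${\bf P}_1$ with only bookkeeping. First I would record the structure produced by \textbf{procedure 3}: each row $i$ of ${\bf P}_1$ is replaced by a block of $k$ consecutive rows $ik, ik+1, \ldots, ik+k-1$ all carrying the same star pattern, where the $j$-th row of the block ($j \in [0,k)$) is obtained from row $ik$ by replacing every integer $s$ with $s + jS_1$ and leaving stars fixed. Consequently every integer of $\tilde{{\bf P}}_1$ has the form $s + jS_1$ with $s \in [0,S_1)$ and $j \in [0,k)$, and since $s < S_1$ this decomposition is unique; the \emph{layer} $j = \lfloor s'/S_1 \rfloor$ is therefore encoded in the numerical value $s'$ itself. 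This single observation is what will unlock condition C3.

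The dimension count ($K$ rows, $\frac{K}{k}$ columns) is immediate. For C1, each column of ${\bf P}_1$ carries $L$ stars, and the row expansion repeats each star pattern $k$ times, giving exactly $kL$ stars per column, so $Z = kL$. For C2', the distinct integers of $\tilde{{\bf P}}_1$ are precisely $\{s + jS_1 : s \in [0,S_1),\, j \in [0,k)\}$, a set of size $kS_1 = \frac{(K-kL)(K-kL+k)}{2k}$, matching the claimed $S$. To count multiplicities I would note that an occurrence of $s' = s + jS_1$ sits in a position $(ik+j,\,c)$ and corresponds bijectively to the occurrence of $s$ at $(i,c)$ in ${\bf P}_1$; since ${\bf P}_1$ is $\frac{2K}{K-kL+k}$-regular, each $s$ occurs $\frac{2K}{K-kL+k}$ times, hence so does each $s'$, giving the claimed regularity.

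The core of the argument is C3, and the unique-layer observation reduces it directly to C3 for ${\bf P}_1$. Take any two equal integer entries $\tilde{p}_{r_1, c_1} = \tilde{p}_{r_2, c_2} = s'$. By uniqueness of the decomposition they share a common layer $j$, so $r_1 = i_1 k + j$ and $r_2 = i_2 k + j$, and they arise from occurrences $p_{i_1, c_1} = p_{i_2, c_2} = s$ in ${\bf P}_1$. Applying C3 for ${\bf P}_1$ yields $i_1 \neq i_2$, $c_1 \neq c_2$, and $p_{i_1, c_2} = p_{i_2, c_1} = \star$. Then $r_1 \neq r_2$ and $c_1 \neq c_2$ in $\tilde{{\bf P}}_1$, and since the star pattern of row $r_1$ coincides with that of ${\bf P}_1$'s row $i_1$ (and similarly for $r_2$), both off-diagonal entries $\tilde{p}_{r_1, c_2}$ and $\tilde{p}_{r_2, c_1}$ are stars, establishing the required $2 \times 2$ star pattern. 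The essential point is that equal integer entries can never straddle two different layers, so the condition is inherited verbatim from ${\bf P}_1$.

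Finally, for the cyclic property I would invoke that ${\bf P}_1$ is $1$-cyclic with the $L$ stars of each column occupying consecutive rows. Replacing each row by a block of $k$ identical-star rows keeps the stars of each column consecutive (now $kL$ of them) and converts a downward shift of one row between adjacent columns of ${\bf P}_1$ into a downward shift of exactly $k$ rows between adjacent columns of $\tilde{{\bf P}}_1$, so $\tilde{{\bf P}}_1$ is $k$-cyclic. Combining the four verified properties shows that $\tilde{{\bf P}}_1$ is a $k$-cyclic $\frac{2K}{K-kL+k}$-regular $(\frac{K}{k}, K, kL, \frac{(K-kL)(K-kL+k)}{2k})$ PDA. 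I expect C3 to be the only step needing genuine care, and it hinges entirely on the fact that the value of an integer entry determines its layer.
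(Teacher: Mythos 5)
Your proof is correct and takes essentially the same route as the paper: the paper likewise decomposes $\tilde{\bf P}_1$ into $k$ layers with disjoint integer sets $[jS_1,(j+1)S_1)$, identifies the layer-$j$ rows with the shifted copy ${\bf P}_1 + jS_1$ so that C1, C2$'$ and C3 are inherited from ${\bf P}_1$, and obtains $k$-cyclicity from the $1$-cyclic star structure of ${\bf P}_1$ being repeated $k$ times per row. Your explicit observation that the value $s'$ determines its layer $\lfloor s'/S_1 \rfloor$ is simply a sharper statement of the paper's remark that distinct layers use distinct integers, so same-integer entries never straddle layers.
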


	\begin{proof}
The number of integers present in the matrix $\tilde{{\bf P}}_1$ is $\tilde{S}_1 =kS_1$ and the number of stars present in each column of the matrix $\tilde{{\bf P}}_1$, is $\tilde{Z}_1=kZ_1=kL$. 	This is because, for each $i \in [0,\frac{K}{k})$, the $(ik)^{th}$ row of the matrix $\tilde{{\bf P}}_1$ is same as the $i^{th}$ row of the matrix ${\bf P}_1$ (as in step $21$) and for the $(ik+j)^{th}$ row, $j \in [2,k)$ of the matrix $\tilde{{\bf P}}_1$, each entry is obtained by adding $S_1$ to the corresponding elements in the previous row of the matrix $\tilde{{\bf P}}_1$, where $\star+S_1 =\star$.  This is to make sure that the integers used for filling the $(ik+j)^{th}$ row is different from the integers used for filling the $(ik)^{th}$ row to $(ik+j-1)^{th}$ row. Hence, the condition $\textit{C1}$ in the Definition \ref{def: PDA} is satisfied by the matrix $\tilde{\bf P}_1$.
	
	Consider $S_1$ integers in the set $[0,S_1)$. The sub-matrix of the matrix $\tilde{{\bf P}}_1$, obtained by taking only the rows in which the integers in $[0,S_1)$ are present, is ${\bf P}_1$. Now, consider $S_1$ integers in the set $[S_1,2S_1)$. The sub-matrix of the matrix $\tilde{{\bf P}}_1$, obtained by taking only the rows in which the integers in $[S_1,2S_1)$ are present, is ${\bf P}_1 +S_1$, where  ${\bf P}_1 +S_1 =(p_{i,j} +S_1)$, with $\star + S_1=\star$. The positions of stars in ${\bf P}_1 +S_1$ is same as that of ${\bf P}_1$ and each integer entries in ${\bf P}_1 +S_1$ is obtained by adding $S_1$ to the corresponding entry in ${\bf P}_1$. Hence, each integer in $[S_1,2S_1)$ occurs $\frac{2K}{K-kL+k}$ times in the sub-matrix  ${\bf P}_1 +S_1$ and the condition $\textit{C3}$ in Definition \ref{def: PDA} is satisfied by the matrix ${\bf P}_1 +S_1$.
	Similarly for any $i \in [1,k-1]$ if we consider $S_1$ integers in the set $[iS_1,(i+1)S_1)$, The sub-matrix of the matrix $\tilde{{\bf P}}_1$, obtained by taking only the rows in which the integers in $[iS_1,(i+1)S_1)$ are present, is ${\bf P}_1 +iS_1$, where  ${\bf P}_1 +iS_1 =(p_{i,j} +iS_1)$, with $\star + iS_1=\star$. The position of stars in ${\bf P}_1 +iS_1$ is same as that of ${\bf P}_1$ and each integer entries in ${\bf P}_1 +iS_1$ entries are obtained by adding $iS_1$ to the corresponding entry in ${\bf P}_1$. Hence, if we take any $s \in [iS_1,(i+1)S_1)$, the condition $\textit{C3}$ in Definition \ref{def: PDA} is satisfied by the matrix ${\bf P}_1 +iS_1$. Also, each integer in $[iS_1,(i+1)S_1)$ occurs $\frac{2K}{K-kL+k}$ times in the sub-matrix  ${\bf P}_1 +iS_1$ and hence the condition $\textit{C2}'$ in Definition \ref{def regular PDA} is satisfied by the matrix ${\bf P}_1 +iS_1$.
	
	 Summing up all the observations, for any $s \in [0,kS_1)$, the condition $\textit{C3}$ in Definition \ref{def: PDA} is satisfied by the matrix $\tilde{{\bf P}}_1$. Also,  the condition $\textit{C2}'$ in Definition \ref{def regular PDA} is satisfied by the matrix $\tilde{{\bf P}}_1$ with $g=\frac{2K}{K-kL+k}.$ 
	 
	 We know that there are $L$ stars in each column in the matrix ${\bf P}_1$ which occurs in consecutive rows and position of stars  in each column is obtained by the shifting the position of stars in the previous column downwards by $1$ unit. We also know that for each $i \in [0,\frac{K}{k})$, the $(ik)^{th}$ row of the matrix $\tilde{{\bf P}}_1$ is same as the $i^{th}$ row of the matrix ${\bf P}_1$ and the position of stars in the  $(ik+1)^{th}$ row to  $((i+1)k-1)^{th}$ row in the matrix $\tilde{{\bf P}}_1$, is same as that of the $(ik)^{th}$ row of the matrix $\tilde{{\bf P}}_1$. Therefore, in the matrix $\tilde{{\bf P}}_1$, there are $kL$ stars in each column which occur in consecutive row and the position of stars  in each column is obtained by the shifting the position of stars in the previous column downwards by $k$ units. Hence the matrix $\tilde{{\bf P}}_1$ represents a $k$-cyclic $\frac{2K}{K-kL+k}$-regular $(\frac{K}{k},K,kL,\frac{(K-kL)(K-kL+k)}{2k} )$ PDA.

\end{proof}

{\textit{Proof of Theorem \ref{thm valid pda}}}:
Each matrix $\tilde{{\bf P}}_t, t \in [2,k]$ represents a $k$-cyclic $\frac{2K}{K-kL+k}$-regular  $(\frac{K}{k},K,kL,\frac{(K-kL)(K-kL+k)}{2k} )$ PDA with the $\frac{(K-kL)(K-kL+k)}{2k}$ integers present in the PDA $\tilde{{\bf P}}_t$ being $[(t-1)\tilde{S}_1,t\tilde{S}_1)$. 	This is because, each entry in $ \tilde{{\bf P}}_t$ is obtained by adding $(t-1)\tilde{S}_1$ to the corresponding entry in $\tilde{{\bf P}}_1$, where $\star +(t-1)\tilde{S}_1 =\star.$
	
	The number of stars present in each column of the matrix $\tilde{{\bf P}}_t$ is same as that in  $\tilde{{\bf P}}_1$. Hence, the matrix ${\bf P}$ obtained by concatenating all the matrices in $\{\tilde{{\bf P}}_t: t \in [1,k]\}$ obeys condition $\textit{C1}$ in Definition \ref{def: PDA} with $Z=\tilde{Z}_1=kL$.
	The matrix ${{\bf P}}$ is obtained by concatenating $k$ number of $k$-cyclic $\frac{2K}{K-kL+k}$-regular  $(\frac{K}{k},K,kL,\frac{(K-kL)(K-kL+k)}{2k} )$ PDAs with the integers present in each of the PDAs  $\tilde{{\bf P}}_t, t\in [1,k],$ are different from one another.
 Hence, the total number of integers present in the matrix ${\bf P}$ obtained by concatenating all the matrices in $\{\tilde{{\bf P}}_t: t \in [1,k]\}$ is $S=k\tilde{S}_1=\frac{(K-kL)(K-kL+k)}{2}$ and the matrix ${\bf P}$ obeys condition $\textit{C2}'$ in Definition \ref{def regular PDA} with $g=\frac{2K}{K-kL+k}$.  The matrix ${\bf P}$ also satisfies condition $\textit{C3}$ in Definition \ref{def: PDA}. Hence the matrix ${\bf P}$ represents a  $\frac{2K}{K-kL+k}$-regular $(K,K,kL,\frac{(K-kL)(K-kL+k)}{2} )$ PDA.

		Now, we need to prove that the PDA ${\bf P}$ is $k$-cyclic. Since, each matrix $\tilde{{\bf P}}_t, t \in[1,k]$ represents a $k$-cyclic $\frac{2K}{K-kL+k}$-regular  $(\frac{K}{k},K,kL,\frac{(K-kL)(K-kL+k)}{2k} )$ PDA, the $kL$ stars in each column in ${\bf P}$ occur in consecutive rows. The matrix  ${\bf P}$ is $k$-cyclic if the position of stars in each column is obtained by shifting the stars in the previous column towards down by $k$ units.		
		The first $\frac{K}{k}$ columns in ${\bf P}$  is $k$-cyclic since the matrix  $\tilde{{\bf P}}_1$ is $k$-cyclic. We have assumed that $k|K$, hence shifting the position of stars in the $(\frac{K}{k}-1)^{th}$ column of $\tilde{{\bf P}}_1$ towards down by $k$ units will result in the pattern of stars in the $0^{th} $ column of $\tilde{{\bf P}}_1$. The pattern of stars in the $0^{th} $ column of $\tilde{{\bf P}}_1$ and $\tilde{{\bf P}}_2$ is the same. Hence shifting the position of stars in the $(\frac{K}{k}-1)^{th}$ column of ${\bf P}$ towards down by $k$ units will result in the pattern of stars in the $(\frac{K}{k})^{th} $ column of ${\bf P}$. Similarly after every $\frac{K}{k}$ columns the pattern repeats. So, the PDA ${\bf P}$ is $k$-cyclic. This completes the proof.

	\section{Discussion}
In this work, we have constructed a new class of PDAs which we call as \textit{$t$-cyclic $g$-regular PDA}. This class of PDA is used for providing the delivery scheme for multi-access coded caching problems when $\gamma \in \left \{\frac{k}{K}: k | K, (K-kL+k)|K, k\in \left [1,K \right ]\right \}$. The sub-packetization level required for our scheme is the least compared to the state-of-the-art schemes. For certain ranges of values of $L$, the transmission rate is also less compared to some of the existing schemes. We have obtained a scheme based on PDA only for certain ranges of values of $\gamma$. Obtaining a scheme based on PDA with linear sub-packetization for the entire range of $\gamma$ is an interesting problem to work on.

\section*{Acknowledgement}
This work was supported partly by the Science and Engineering Research Board (SERB) of Department of Science and Technology (DST), Government of India, through J. C. Bose National Fellowship to B. Sundar Rajan.
 
%%%%%%%Introduction%%%%%%%
%%%%%%%%%%%%%%%%%%%%%%%%%%%%%%%%%%%%%%%%%%%%%
\end{document}